\newtheorem{theorem}{Theorem}
\newtheorem{lemma}[theorem]{Lemma}
\newtheorem{proposition}[theorem]{Proposition}
\newcommand{\Ns}{N_s}
\def\bb0{{\mathbb{0}}}
\def\bb{{\mathbf{b}}}
\def\bff{{\mathbf{f}}}
\def\bg{{\mathbf{g}}}
\def\bh{{\mathbf{h}}}
\def\bs{{\mathbf{s}}}
\def\bv{{\mathbf{v}}}
\def\bw{{\mathbf{w}}}
\def\bx{{\mathbf{x}}}
\def\by{{\mathbf{y}}}
\def\b0{{\mathbf{0}}}
\def\bF{{\mathbf{F}}}
\def\bG{{\mathbf{G}}}
\def\bH{{\mathbf{H}}}
\def\bI{{\mathbf{I}}}
\def\bK{{\mathbf{K}}}
\def\cH{\mathcal{H}}
\def\cI{\mathcal{I}}
\def\cV{\mathcal{V}}
\def\sf0{{\mathsf{0}}}
\def\rmC{\mathrm{C}}
\def\rmD{\mathrm{D}}
\def\rmI{\mathrm{I}}
\def\rmN{\mathrm{N}}
\def\rmP{\mathrm{P}}
\def\rmR{\mathrm{R}}
\def\rmV{\mathrm{V}}
\def\rmc{{\mathrm{c}}}
\def\rmt{{\mathrm{t}}}
\def\rm0{{\mathrm{0}}}
\def\Nt{{N_t}}
\def\Nr{{N_r}}
\def\Ns{{N_s}}
\DeclareMathOperator*{\argmax}{arg\,max}
\begin{document}
\title{Limited Feedback For Temporally Correlated MIMO Channels With Other Cell Interference\footnote{This work was supported by the Semiconductor Research Company (SRC) Global Research Consortium (GRC) task ID 1648.001.}}
\author{\authorblockN{Salam Akoum and Robert W. Heath, Jr.}
\authorblockA{Department of Electrical \& Computer Engineering \\
Wireless Networking and Communications Group\\
The University of Texas at Austin\\
1 University Station C0803 \\
Austin, TX 78712-0240 \\
\{salam.akoum, rheath\}@mail.utexas.edu\\}}
\maketitle

%%%%%%%%%%%%%%%%%%%%%%%%%%%%%%%%%%%%%%%%%%%%%%%%%%%%%%%%%%%%
\begin{abstract}
Limited feedback improves link reliability with a small amount of feedback from the receiver back to the transmitter. In cellular systems, the performance of limited feedback will be degraded in the presence of other cell interference, when the base stations have limited or no coordination. This paper establishes the degradation in sum rate of users in a cellular system, due to uncoordinated other cell interference and delay on the feedback channel. A goodput metric is defined as the rate when the bits are successfully received at the mobile station, and used to derive an upper bound on the performance of limited feedback systems with delay. This paper shows that the goodput gained from having delayed limited feedback decreases doubly exponentially as the delay increases. The analysis is extended to precoded spatial multiplexing systems where it is shown that the same upper bound can be used to evaluate the decay in the achievable sum rate. To reduce the effects of interference, zero forcing interference cancellation is applied at the receiver, where it is shown that the effect of the interference on the achievable sum rate can be suppressed by nulling out the interferer. Numerical results show that the decay rate of the goodput decreases when the codebook quantization size increases and when the doppler shift in the channel decreases.
\end{abstract}

%\newpage

%%%%%%%%%%%%%%%%%%%%%%%%%%%%%%%%%%%%%%%%%%%%%%%%%%%%%%%%%%%%
\section{Introduction} \label{sec:Intro}
Multiple input multiple output (MIMO) communication systems can use limited feedback of channel state information
from the receiver to the transmitter to improve the data rates and link reliability on the
downlink \cite{AGoldsmith,JeffWimax, OverviewLFWC}. With limited feedback, channel state information is quantized
by choosing a representative element from a codebook known to both the receiver and the transmitter.
Quantized channel state information is used at the transmitter to design intelligent transmission
strategies such as precoded spatial multiplexing and transmit beamforming \cite{LoveSpatialMultiplex, OverviewLFWC}.
Limited feedback concepts have been applied to more advanced system configurations such as MIMO-OFDM and multiuser MIMO and are proposed for current and next generation wireless systems \cite{OverviewLFWC}.

Most prior work on single user limited feedback MIMO focused on the block fading channel model, where the channel is assumed constant over one block and consecutive channel realizations are assumed independent. Following this assumption, limited feedback was cast as vector quantization problems \cite{VQbook}. Different methods for codebook design have been developed such as line packing \cite{LoveSpatialMultiplex, BeamFiniterateFeedback, Grassmanlove}, and Lloyd's algorithm \cite{Narula, Giannakis, RaoB, RVQ}. While these approaches are optimal for block-to-block independently fading channels, they do not capture the temporal correlation inherent in realistic wireless channels \cite{OverviewLFWC}. Feedback methods that can track the temporal evolution of the channel and adaptive codebook strategies are proposed to improve the quantization \cite{mondal2006channel}, \cite{yang2007transmission}. In \cite{mondal2006channel}, an adaptive quantization strategy in which multiple codebooks are used at the transmitter and the receiver to adapt to a time varying distribution of the channel is proposed. In \cite{yang2007transmission}, a new partial channel state information (CSI) acquisition algorithm that models and tracks the variations between the dominant subspaces of channels at adjacent time instants is employed. Markov models to analyze the effect of the channel time evolution and consequently, the feedback delay are proposed in \cite{FSMC},\cite{LFTCJ}, \cite{CapacityInterStreams}. Other temporal correlation models and measurement results of the wireless channel are used in \cite{FeedbackDelayModel2}, \cite{FeedbackDelayModel3}, \cite{BobGlobecomm} to evaluate the effect of the feedback delay. In \cite{FeedbackDelayModel2}, the authors quantize the parameters of the channel to be fed back using adaptive delta modulation, taking into consideration the composite delay due to processing and propagation. The authors in \cite{FeedbackDelayModel3} present measurement results of the performance of limited feedback beamforming when differential quantization methods are employed. Measurement results presented in \cite{BobGlobecomm} show that the upper bound on the throughput gain obtained using a Markov model, for an indoor wireless LAN setup, is accurate.

Feedback delay exists due to sources such as signal processing algorithms, propagation and channel access protocols. The effect of feedback delay on the achievable rate and bit error rate performance of MIMO systems has been investigated in several scenarios \cite{LFDelayConf, MarkovKroenecker, CapacityInterStreams , GiannakisPrediction, FiniteRateRao}. The
feedback delay has been found to reduce the achievable throughput \cite{LFDelayConf, MarkovKroenecker}, and to
cause interference between spatial data streams \cite{CapacityInterStreams}. Channel prediction was
proposed in \cite{GiannakisPrediction, KobayashiDelay} to remedy the effect of the feedback delay. Albeit not in the context of limited feedback, the authors use pilot symbol assisted modulation to predict the channel based on the Jakes model for temporal correlation of the channel. The authors in \cite{LFDelayConf, LFTCJ} derived expressions for the feedback bit rate, throughput gain and feedback compression rate as a function of the delay on the feedback channel.

Most of the works on limited feedback MIMO considered an ergodic metric for the achievable rate. Such a metric might not be appropriate to account for the slow-fading, temporally correlated channel. For limited feedback systems, the CSI at the transmitter gets corrupted due to errors on the feedback channel such as feedback delay, quantization, as well as noise. The uncertainty about the actual channel state causes the transmitted packets to become corrupted whenever the transmitted rate exceeds the instantaneous mutual information of the channel, hence causing an outage \cite{LauV},\cite{Aggarwal}, \cite{Kobayashi}. The effect of packet outage is accentuated in multi-cell environments, when the base stations have limited or no coordination. In the presence of uncoordinated other cell interference, the transmitter modulates its information at a rate that does not take into account the added interference at the mobile station, hence increasing the probability of outage. Thus, an analytical method that takes into account both delay and other cell interference is important to quantify the performance of limited feedback MIMO over slow fading channels, in multi-cell environments.

MIMO cellular systems are interference limited. While
multi-cell MIMO and base station cooperation techniques \cite{Shamai}, \cite{3GPPLTEA1}, \cite{3GPPLTEA2} can mitigate the effect of interference, when the base stations share full or partial channel state and/or data information,
they incur overhead that scales exponentially with the number of base stations. Issues such as complexity of joint processing across all the base stations, difficulty in acquiring full CSI from all the mobiles at each base station,
and time and phase synchronization requirements make full coordination extremely
difficult, especially for large networks. Thus while base station coordination is an attractive long term solution, in the near term, an understanding of the impact of the interference on limited feedback is required.
Single cell limited feedback MIMO techniques are expected to loose much of their effectiveness in the presence of multi-cell interference \cite{Sim_CellMIMO}. When each cell designs its channel state index independently of the other cell interference, a scenario where every transmitter-receiver pair is trying to optimize its own rate occurs, hence decreasing the overall sum rate of the limited feedback, when compared to noise limited environments.

In this paper, we derive the impact of delay on the achievable sum rate of limited feedback MIMO systems in the presence of other cell interference. To account for the packet outage, we use the notion of goodput. We define the goodput as the number of bits successfully transmitted to the receiver per unit of time, or in other words, the rate at the transmitter when it does not exceed the instantaneous mutual information of the channel. We model the fading of the MIMO channels in the system as independent first order finite state Markov chains. Assuming limited or no feedback coordination between adjacent base stations, and using Markov chain convergence theory, we show that the feedback delay, coupled with the other cell interference at the mobile station, causes the spectral efficiency of the system to decay exponentially. The decay rate almost doubles when the mobile station is at the edge of its cell, and hence interference limited.

We evaluate the joint effect of delay and uncoordinated other cell interference on the achievable sum goodput of both limited feedback beamforming and limited feedback precoded spatial multiplexing MIMO systems. We derive an upper bound on the goodput gain for both single stream and multi-stream limited feedback systems. We show that the goodput gain decays doubly exponentially with the feedback delay.
To mitigate the effect of other cell interference, while still assuming limited coordination between the base stations, we also consider the net performance improvement through the application of zero forcing interference cancellation at the receiver. Assuming one strong interferer and multiple antennas at the receiver, we use the available degrees of freedom to apply zero forcing (ZF) nulling \cite{FundWireless, InterferenceCellular}. We derive
closed form expressions of the achievable ergodic goodput with zero forcing cancellation, and compare its performance to that of the noise limited single cell environment.

The effect of delay on the throughput gain of a limited feedback beamforming system was considered in \cite{LFTCJ}. The authors, however, did not consider other cell interference nor did they account for the inherent packet outage. In contrast, our paper targets the performance of limited feedback systems in
interference limited scenarios and derives upper bounds of the performance limits of these systems as the delay on the feedback link increases.

The goodput notion we consider borrows from that considered in \cite{LauV}, \cite{Aggarwal}, \cite{Kobayashi}.  The authors in \cite{LauV} jointly design the precoders, the rate and the quantization codebook to maximize the achievable goodput of the limited feedback system. They consider the noise on the feedback channel as the only driver for the packet outage, and do not account for other cell interference. The authors in \cite{Kobayashi} use the goodput metric to design scheduling algorithms to combat the degradation in performance due to feedback delay in a single cell MIMO system setup. Similarly, the authors in \cite{Aggarwal} propose a greedy rate adaptation algorithm to maximize the goodput as a function of the feedback delay, using an automatic repeat request (ARQ) system to feedback channel state information. Practically, when the channel state information obtained through limited feedback is corrupted, one approach is to use fast hybrid automatic repeat request (HARQ) \cite{CostelloMiller}, to provide closed loop channel adaptation.  ARQ, however, is also subject to delay and errors on the feedback channel. In this paper, we assume in our definition of goodput that the HARQ is not present, and hence packets transmitted at a rate higher than the instantaneous rate of the channel will be lost.

\textbf{Organization:} This paper is organized as follows. In Section \ref{sec:sysModel},
we describe the limited feedback multicell system considered. In Section \ref{sec:LFfeedback},
we present the limited feedback mechanism employed. Section \ref{sec:SysGoodput} introduces the
 system goodput, and Section \ref{sec:secMarkov} describes the channel state Markov chain used
 for the analysis. In Section \ref{sec:analysis}, we present the effect of the feedback delay
 on the goodput gain. Section \ref{sec:InterfCancel} presents ZF interference cancellation
 at the receiver to mitigate the effect of other cell interference. Section \ref{sec:Precoding} extends the
 results in Section \ref{sec:analysis} to precoded spatial multiplexing. Section
 \ref{sec:results} presents numerical results that show the different aspects of
 the relation between the feedback rate gain and the feedback delay. This is
 followed by concluding remarks in Section \ref{sec:Conclusion}.

\textbf{Notation:} Bold lowercase letters $\mathbf{a}$ are used to denote column vectors, bold uppercase letters $\mathbf{A}$ are used to denote matrices, non bold letters $a$ are used to denote scalar values, and caligraphic letters $\mathcal{A}$ are used to denote sets or functions of sets. Using this notion, $|a|$ is the magnitude of a scalar, $\|\mathbf{a}\|$ is the vector 2-norm, $\mathbf{A}^*$ is the conjugate transpose, $\mathbf{A}^T$ is the matrix transpose, $[\mathbf{A}]_{lm}$ is the scalar entry of $\mathbf{A}$ in the $\ell^{th}$ row and the $k^{th}$ column. We use $\mathbb{E}$ to denote expectation and $\mathrm{a}^{\mathrm{t}}$ to denote the metric $\mathrm{a}$ evaluated at the transmitter.

%%%%%%%%%%%%%%%%%%%%%%%%%%%%%%%%%%%%%%%%%%%%%%%%%%%%%%%%%%%%
\section{System Model} \label{sec:sysModel}

We consider the modified Wyner type \cite{Wyner94} $N_B$-cell $K$-user per cell circular array cellular model. The base stations $B_i,\; i = 1,\cdots, N_B$, with $N_t$ transmit antennas each, serve mobile stations $M_i$ with $N_r$ receive antennas.
We index the mobile users by the same index of the base station they receive their desired signal from, for tractability.
The users are located at the edge of their cells, such that each user is reachable from the two closest base stations only. The base stations have limited or no coordination. Figure \ref{fig:cell_sys} illustrates the cellular model for two adjacent interfering base stations.

Each cell employs a limited feedback beamforming system. The system, illustrated in Figure \ref{fig:LF_sys}, is discrete time, where continuous time signals are sampled at the symbol rate $1/T_s$, with $T_s$ being the symbol duration. Consequently, each signal is represented by a sequence of samples with $n$ denoting the sample index. Assuming perfect synchronization between the base stations, matched filtering, and a narrowband channel, the $n$-th received data sample $\by_1[n]$ for a single user of interest $M_1$ in base station $B_1$ can be written as
\begin{eqnarray}
\nonumber \by_1[n] &=& \sqrt{\frac{\alpha_1}{\Nt}}\bH_1[n]\bx_1[n] + \sqrt{\frac{\alpha_2}{\Nt}}\bG_2[n]\bx_2[n]+ \bv_1[n],
\end{eqnarray}
where $\by_1[n] \in \mathbb{C}^{N_r\times 1}$ is the received signal vector at $M_1$; $\bH_1[n] \in \mathbb{C}^{N_r\times N_t}$ is the small-scale fading channel between $B_1$ and $M_1$.
$\alpha_1$ and $\alpha_2$ are the received powers of the desired and interfering signal, respectively,  at $M_1$.
$\bG_2[n] \in \mathbb{C}^{N_r\times N_t}$ represents the
$n$-th realization of the MIMO channel between $B_2$ and $M_1$. $\bx_1[n]$ is the desired transmit signal vector for $M_1$, subject to the power constraint
$\mathbb{E}[\|\bx_1\|^2]= N_t$.  If transmit beamforming is employed, and hence only one stream $s_1[n]$ is transmitted at time $n$, the signal $\bx_1[n] = \bff_1[n]s_1[n]$, where $\bff_1[n]$ is the unit norm beamforming vector. If, however, precoded spatial multiplexing is used, $\bx_1[n]=\bF_1[n]\bs_1[n]$, where $\bF_1[n]$ in $\mathbb{C}^{N_t\times N_s}$
is unitary ($\bF^*_1\bF_1 = \frac{1}{N_s}\bI_{N_s}$), $N_s$ is the number of spatial multiplexing streams transmitted.
$\bx_2[n]$ is the interfering transmit signal vector designated for $M_2$ served by base station $B_2$, subject to the power constraint
$\mathbb{E}[\|\bx_2\|^2]= N_t$; $\bv_1[n] \in \mathbb{C}^{Nr\times 1}$ is $\mathcal{CN}(0, \bI)$, modeling the additive noise observed at $M_1$.

The random processes $\{\bH_i[n]\}$ and $\{\bG_i[n]\}$ are assumed stationary, ergodic and temporally correlated. The assumption that these channels are Gaussian distributed is not necessary for our analysis.

Moreover, the desired and interfering channels at $M_1$, $\bH_1[n]$ and $\bG_2[n]$, are independent, since the base stations are geographically separated.
They are assumed to be perfectly known at $M_1$, thereby ignoring channel estimation error at the receiver.

\section{CSI Limited Feedback}\label{sec:LFfeedback}
In this paper, we consider a finite rate feedback link, as depicted in Figure \ref{fig:LF_sys}. The mobile user $M_i$ first estimates the channel state information sequence $\{\bH_i[n]\}$ using pilot symbols sent by the base station $B_i$. Next, the CSI quantizer efficiently quantizes the channel sequence by means of a Grassmannian codebook, as
outlined in  \cite{LoveSpatialMultiplex,Grassmanlove}. The quantization process depends on whether transmit beamforming or precoded spatial multiplexing is used, as outlined in the following subsections. The quantization index is sent to the transmitter via a limited feedback channel.

\subsection{Transmit beamforming}
For the case of transmit beamforming, the beamforming vector $\bff$ has rank one, $\bff \in \mathcal{C}^{N_t\times 1}$, and the base stations send a one dimensional stream of data $s$ to the mobile users. To maximize the signal to noise ratio (SNR) for a given channel realization $\bH[n]$, the quantizer function $\mathcal{Q}$ at the receiver maps the channel matrix $\bH[n]$ to beamforming vector $\bff[n]$ in the codebook $\mathcal{F}$ and a corresponding index $I_n$ such that
\begin{eqnarray}\label{eqn:quantize_beam}
\bff[n] &=& \mathcal{Q}\{\bH[n]\} = \argmax_{\bv_l\in\mathcal{F}}\|\bH[n]\bv_l\|^2, \quad 1\leq l\leq N.
%I_n &=& \mathcal{Q}\{\bH_k[n]\} = \argmax_{1\leq l\leq N}|\bH_k[n]\bv_l|^2.
\end{eqnarray}
The channel $\bH[n]$ is mapped to the index $I_n = \ell$ if the code $\bv_l$ maximizes the SNR metric $\|\bH[n]\bv_l\|^2$.
$\bH[n]$ is then said to be in the Voronoi cell $\mathcal{V}_l$.
The feedback of the index $I_n$, called the feedback state, is sufficient for the transmitter to choose the necessary beamforming vector from the same codebook $\mathcal{F}$. The feedback state requires $\log_2(N)$ bits, where $N$ is the number of possible codes in the codebook.

\subsection{Precoded spatial multiplexing}
For the case of precoded spatial multiplexing, the precoder $\bF$ is a unitary matrix with rank $N_s$, where $N_s$ is the number of spatial multiplexing streams. Several different criteria are available to choose the optimal precoding matrix from a given codebook \cite{LoveSpatialMultiplex}. We choose the mutual information maximizing criterion, where the quantizer function $\mathcal{Q}$ maps the channel matrix to the precoding matrix $\bF[n]$ that maximizes the mutual information expression
\begin{eqnarray}\label{eqn:quantize_precode}
\nonumber \bF[n] &=& \mathcal{Q}\{\bH[n]\} = \argmax_{\bF_l\in\mathcal{F}}\bI(\bF_l) =\argmax_{\bF_l\in\mathcal{F}}\log_2\left(\det\left(\bI_{N_s}+ \frac{1}{N_s}(\bH[n]\bF_l)^*\bH[n]\bF_l\right)\right), \quad 1\leq l\leq N.\\
%I_n &=& \mathcal{Q}\{\bH_k[n]\} = \argmax_{1\leq l\leq N}|\bH_k[n]\bv_l|^2.
\end{eqnarray}
%\begin{eqnarray}
%\nonumber I_n &=&  \mathcal{Q}\{\bH[n]\}\\
%&=& \argmax_{\bF_l\in\mathcal{F}}\log_2\left(\det\left(\bI_{N_s}+ \frac{1}{N_s}(\bH[n]\bF_l)^*\bH[n]\bF_l\right)\right), \quad 1\leq l\leq N.\\
%%I_n &=& \mathcal{Q}\{\bH_k[n]\} = \argmax_{1\leq l\leq N}|\bH_k[n]\bv_l|^2.
%\end{eqnarray}
The receiver then sends the precoding matrix index $I_n = \ell$, corresponding to the precoder $\bF_l$, to the base station.

We assume that the feedback channel is free of error but has a delay of D samples. The error free assumption is
justifiable as the control channels are usually protected using aggressive error correction coding. Given this feedback channel,
the channel state information available at the transmitter $I_{n-\mathrm{D}}$ lags behind the actual channel state $I_n$ at the receiver by D samples.
The delay is primarily caused by signal processing algorithms complexity, channel access protocols and propagation delay. A fixed delay D is assumed on the feedback channel in all base stations. Since the propagation delay has little contribution to the total amount of delay and the other sources of delay, caused by processing at the receiver, are similar across users, different users in different cells experience the same amount of delay.
In what follows, we study the performance of limited feedback MIMO systems over temporally correlated channels for the case of transmit beamforming in Section \ref{sec:analysis}. We extend the analysis to precoded spatial multiplexing in Section \ref{sec:Precoding}.

%%\paragraph*{AS 5} The random processes ${\bH_k[n], \bG_k[n]}$ are assumed correlated in the sense that, when $\bH_k$ is in the subspace $\mathcal{V}_l$, $\bG_k$ is mapped to the same index $l$. In fact, the mobile stations $\mathrm{MS}_1$ and $\mathrm{MS}_2$, co-located at the edge of their respective cells, are close enough to each other such that the channel $\bH_1$, between $\mathrm{BS}_1$ and $\mathrm{MS}_1$ is correlated with the channel $\bG_1$ between $\mathrm{BS}_1$ and the $\mathrm{MS}_2$.
%%%%%%%%%%%%%%%%%%%%%%%%%%%%%%%%%%%%%%%%%%%%%%%%%%%%%%%%%%%%%%%%%%%%%%%%%%%%%%%%%%%%%%%%%%%%%%%%%%%%%%%%%%%%%%%%%%%%%%%%%%

\section{The Goodput Expression over the Markov Channel Model}
The channel state information that reaches the transmitter suffers from feedback delay and quantization error. Moreover, the CSI quantizer at the receiver chooses the quantization codeword to maximize the desired signal to noise ratio, without taking into account the interference from the neighboring base station. Consequently, the CSI at the transmitter does not contain any information about the other cell interference affecting the mobile user. The base station, assuming that its received CSI is accurate, modulates its transmit signal at a rate corresponding to its erroneous CSI, sometimes resulting in a \emph{rate outage} or \emph{packet outage}, when the transmit rate exceeds the instantaneous mutual information of the channel. In this paper, we assume that HARQ is not present, and we evaluate the amount of information received without error as a function of the delay on the feedback channel.

\subsection{Conditional System Goodput}\label{sec:SysGoodput}
To account for the rate outage, we assume that any transmission at a rate higher than the capacity of the channel
fails. In other words, if the rate at the transmitter $\mathrm{R}^{\mathrm{t}}[n-\mathrm{D}]$, where D is the delay on the feedback channel, exceeds the instantaneous mutual information at the receiver $\mathrm{R}[n]$ , the transmission is declared unsuccessful.
The instantaneous system goodput $\rho[n,\mathrm{D}]$ is defined as
\begin{eqnarray}
\rho[n, \mathrm{D}] &=& \mathrm{R}^{\mathrm{t}}[n-\rmD]\mathrm{ }\mathcal{I}\left(\mathrm{R}^{\mathrm{t}}[n-\rmD]\leq \mathrm{R}[n]\right),
%&=& \left\{\begin{array}{c}
%      R^{t}[n-D], \quad \mathrm{if } R^{t}[n-D] \leq C_{inst}[n] \\
%      0, \quad\qquad\qquad \mathrm{Otherwise}
%    \end{array}\right.
%     %&=& \log_2(1+\mathrm{SINR}^t)\mathrm{ }\mathcal{I}\left(\mathrm{SINR}^t < \mathrm{SINR}\right),
\end{eqnarray}
where $\mathcal{I}(A)$ is the indicator function, which evaluates to $1$ if the event $A$ is true, and $0$ otherwise\cite{LauV}.
%%the capacity is given by $\log_2(1+\mathrm{SINR})$, and $\mathrm{SINR}^t$ denotes the signal to interference noise ratio (SINR) at the transmitter.
The ergodic goodput, averaged over the set $\large{\cH}$ of the MIMO fading channels,
$$\large{\mathcal{H}} = \{\bH_1[n],\bH_1[n-\rmD], \bH_2[n], \bH_2[n-\rmD], \bG_2[n]\},$$
can be expressed as
\begin{equation}
\bar{\rho}(\rmD) = \mathbb{E}_{\large{\cH}}\left[\mathrm{R}^{\mathrm{t}}[n-\rmD] \mathbb{P}\left(\mathrm{R}^{\mathrm{t}}[n-\rmD] \leq \mathrm{R}[n] \right)\right].
\end{equation}
In the sequel, we approximate the fading as a discrete time Markov process and use the Markov structural properties to derive closed form expressions of the ergodic goodput, as a
function of the feedback delay, with and without interference cancellation methods at the receiver.
%%%%%%%%%%%%%%%%%%%%%%%%%%%%%%%%%%%%%%%%%%%%%%%%%%%%%
\subsection{Channel State Markov Chain}\label{sec:secMarkov}
We approximate the fading of the MIMO channels $\bH_k$  as a discrete time first order Markov process \cite{FSMC}.
Since the feedback state index $I_{n}$ is mapped from the channel $\bH_k[n]$ by the quantization function in (\ref{eqn:quantize_beam}) and (\ref{eqn:quantize_precode}), we follow the approach in \cite{MarkovModelsK}, and we model the time variation of the feedback state $I_{n}$ by a first order finite state Markov chain. This Markov chain $\{I_{n}\}$, mapped by the quantizer function from a stationary channel $\bH[n]$, is stationary and has the finite state space $\mathcal{I} = \{1,2,3,\cdots,N\}$, where $N$ is the size of the codebook. The states of this Markov chain are one-to-one mapped with the Voronoi cells $\mathcal{V}_i$ of the channel matrices $\bH[n]$. The probability of transition from state $I_{n} = m$ to state $I_{r} = \ell$ is given by $P_{ml}$. The stochastic matrix is thus $\mathbf{P}$,
with $[\mathbf{P}]_{ml} = P_{ml}$.

The Markov chain is assumed ergodic with a stationary distribution
vector $\boldsymbol{\pi}$, where $\mathbb{P}\left(I_{n} = i\right)= \pi_i$. The stationary probabilities $\pi_i$ are assumed equal, $\pi_i = \pi= \frac{1}{N}$. This follows from the fact that the stationary probability of each Markov state is proportional to the area of its corresponding Voronoi cell, and the Voronoi regions for the codes in the codebook are assumed to have equal volume \cite{MarkovModelsK}.

For the two cell system under investigation, the random processes $\bH_k[n]$, $\bG_k[n]$ in different cells are assumed independent. The base stations have limited or no coordination hence the joint probability mass function of the random processes in the set $\large{\cH}$ is given by the product of the probability mass functions of the processes in each cell individually. For the desired base station, the joint probability mass function between $\bH_1[n]$ and $\bH_1[n-\rmD]$ is given by
\begin{equation}\label{eqn:prtransi}
\mathbb{P}\left[\bH_1[n] \in \mathcal{V}_{k_{10}}, \bH[n-\rmD] \in \mathcal{V}_{k_{1D}}\right]
= \left[\mathbf{P}^{\rmD}\right]_{k_{1D}k_{10}}\pi_{k_{1D}} =  \left[\mathbf{P}^{\rmD}\right]_{k_{1D}k_{10}}\pi.%\quad\quad\quad\quad\quad\quad
\end{equation}
$\mathbf{P}^{\rmD}$ represents $\mathbf{P}$ to the
power of D.  The indices $k_{1i}$ are used to denote the base station $1$ and the amount of delay in time samples $i$. In general, $\left[\mathbf{P}^{\rmD}\right]_{ij}$ does not yield closed form expression for the Markov chain probabilities except for the case of single antennas. The channels corresponding to the two different base stations are independent, and hence the individual Markov chain
transition probabilities are independent but identically distributed.  These probabilities are computed
by Monte Carlo simulations as shown in Section \ref{sec:results}.

%%%%%%%%%%%%%%%%%%%%%%%%%%%%%%%%%%%%%%%%%%%%%%%%%%%%%%%%%%%%%%%%%%%%%%%%%%%%%%%%%%%%%%%%%%%%%%%%%%%%%%%%%%%%%%%%%%%%%%%%%%%%%%%%
\section{The Effect of Delay on the Feedback Goodput Gain}\label{sec:analysis}
We consider the effect of fixed feedback delay on the average system goodput of
the MIMO interference system. The delay D on the feedback channel for the interfering
as well as the desired cell is considered fixed, caused by signal processing,
propagation delay and channel access control.

The instantaneous mutual information computed at the receiver, in the presence of other cell interference, is expressed as
\begin{equation}
\mathrm{R}[n] = \log_2(1 + \mathrm{SINR}[n]),
\end{equation}
where the signal to interference noise ratio $\mathrm{SINR}[n]$, assuming MRC combining at the mobile stations, is given by
\begin{equation}\label{eqn:rate_RX}
\mathrm{SINR}[n] = \frac{\alpha_1\|\bH_1[n]\bff_1[n-\rmD]\|^2}{{\alpha_2\|\bw^*[n]\bG_2[n]\bff_2[n-\rmD]\|^2} + \Nt\|\bw^*[n]\bv_1[n]\|^2},
\end{equation}
where $\bw[n] = \frac{\bH_1[n]\bff_1[n-\rmD]}{\|\bH_1[n]\bff_1[n-\rmD]\|}$ is the MRC combining vector. In general, the $\mathrm{SINR}$ distribution does not have a closed form expression and has to be estimated using Monte Carlo simulations.

At the base station, the transmitter modulates its signal based on the delayed CSI $\rmI_{n-\rmD}$. Assuming continuous rate adaptation and Gaussian transmit signals, the instantaneous transmission rate depends on the delayed precoder $\bff_1[n-\rmD]$ and its corresponding channel $\bH_1[n-\rmD]$, i.e.,
\begin{equation}\label{eqn:rate_TX}
\mathrm{R}^{\mathrm{t}}[n-\rmD] = \log_2(1 + \mathrm{SINR}^{\mathrm{t}}[n]) = \log_2(1 + \|\bH_1[n-\rmD]\bff_1[n-\rmD]\|^2).
\end{equation}
We write the average system goodput $\bar{\rho}(\rmD)$, based on (\ref{eqn:rate_RX}) and (\ref{eqn:rate_TX}),
\begin{eqnarray}\label{eqn:ergodGood}
\nonumber \bar{\rho}(\rmD) &=& \mathbb{E}_{\cH}\left[\mathrm{R}^{\mathrm{t}}[n-\rmD]\cI\left(\mathrm{R}[n]\geq \mathrm{R}^{\mathrm{t}}[n-\rmD]\right)\right]\\
&=&\mathbb{E}_{\bH_1[n-\rmD]}\left[\mathrm{R}^{\mathrm{t}}[n-\rmD]\mathbb{P}\left(\mathrm{R}[n]\geq \mathrm{R}^{\mathrm{t}}[n-\rmD]\;\large|\; \bH_1[n-\rmD]\right)\right]\\
\nonumber &=& \mathbb{E}_{\bH_1[n-\rmD]}\left[\mathrm{R}^{\mathrm{t}}[n-\rmD]\mathbb{P}\left(\mathrm{SINR}[n]\geq \mathrm{SINR}^{\mathrm{t}}[n]\;\large|\;\bH_1[n-\rmD]\right)\right].
%\nonumber&=& \mathbb{E}_{\bH_0[n-D]}\left[\txtR^{t}[n-D]\mathbb{P}\left(\frac{\alpha\|\bw^H[n]\bH_0[n]\bff_0[n-D]\|^2}{\sum_{k=1}^{N_b}{\alpha_k\|\bw^H[n]\bG_k[n]\bff_k[n-D]\|^2} + |\bw^H[n]\bv[n]|^2} \geq \|\bH_0[n-D]\bff_0[n-D]\|^2\quad|\bH_0[n-D]\right)\right].\\
\end{eqnarray}
Following (\ref{eqn:prtransi}), $\bar{\rho}(\rmD)$ can be obtained using the transition probabilities of the Markov chains of the desired and interfering channels.

We distinguish between two types of interference, \emph{severe interference} and \emph{mild interference}. Severe interference occurs when the interference channel $\bG_2$ falls in the same Voronoi cell as that of the transmit beamforming vector $\bff_2$, i.e.,  $\bff_2$ is chosen to maximize $\|\bG_2\bff_2\|^2$. In this case, the interference channel is mapped to a state that corresponds to the beamforming vector that maximizes $\|\bH_2\bff_2\|^2$ in the feedback index Markov chain.
Otherwise, the interference is considered to be mild. The probability $r$ of the interference being severe depends on the probability that the interference channel falls in the same subspace as that of the channel $\bH_2$ and hence, on the size $N$ of the codebook used to quantize the channels. For the case  of severe interference, the probability
\begin{equation}
\mathbb{P}\left[\bG_2[n] \in \mathcal{V}_{k_{20}}, \bH_2[n-\rmD] \in \mathcal{V}_{k_{2D}}\right]
= \left[\mathbf{P}^{\rmD}\right]_{k_{2D}k_{20}}\pi_{k_{2D}} =  \left[\mathbf{P}^{\rmD}\right]_{k_{2D}k_{20}}\pi.
\end{equation}
Consequently, the ergodic goodput gain is
\begin{eqnarray}
\nonumber \bar{\rho}_1(\rmD) &=& \mathbb{E}_{\bH_1[n-\rmD]}\left[\mathrm{R}^{\mathrm{t}}[n-\rmD]\mathbb{P}\left(\mathrm{R}^{\mathrm{t}}[n-\rmD] \leq \mathrm{R}[n,\rmD] \;\large|\;\bH_1[n-\rmD]\right)\right]\\
\nonumber &=& \sum_{k_{1D}=1}^{N}{\rmR^{\rmt}_{k_{1D}}\mathbb{P}\left(\mathrm{R}^{\mathrm{t}}_{k_{1D}} \leq \mathrm{R}[n,\rmD]\;\large|\;\bH_1[n-\rmD]\in \mathcal{V}_{k_{1D}} \right)\mathbb{P}(\bH_1[n-\rmD] \in \mathcal{V}_{k_{1D}})}\\
              &=&\sum_{k_{10}, k_{1D}}{\sum_{k_{20}, k_{2D}}{\mathrm{C}(\rmD)\left[\mathbf{P}^{\rmD}\right]_{k_{1D}k_{10}}\left[\mathbf{P}^{\rmD}\right]_{k_{2D}k_{20}}\pi^{2}}},
\end{eqnarray}
where $\rmR^{\rmt}_{k_{1D}}$ is the rate at the transmitter, $\rmR^{\rmt}$, with $\bH_1[n-\rmD] \in \cV_{k_{1D}}$.
Similarly, $\mathrm{R}_{k_{kD}k_{k0}}$ is the instantaneous rate at the receiver, $\mathrm{R}[n,\rmD]$, with $\bH_k[n-\rmD] \in \mathcal{V}_{k_{kD}}, \bH_k[n] \in \mathcal{V}_{k_{k0}}$,  $k \in \{1,2\}$.
We obtain $\rmC(\rmD)$ as $\mathrm{C}(\rmD) = \left(\mathrm{R}^{\mathrm{t}}_{k_{1D}}\mathbb{P}\left(\mathrm{R}_{k_{kD}k_{k0}} \geq \mathrm{R}^{\mathrm{t}}_{k_{1D}} \right)\right)$.

For the case of mild interference, when the codebook index does not maximize $\|\bG_2\bff_2\|^2$, the joint probability mass function of the random processes $\bG_2$ and $\bff_2$ is given by the product of the stationary probability of each random process individually. Hence, $\bar{\rho}_2$ is simply computed by
\begin{eqnarray}
\bar{\rho}_2(\rmD) &=& \sum_{k_{10}, k_{1D}}{\sum_{k_{20}, k_{2D}}{\rmC(\rmD)\left[\mathbf{P}^{\rmD}\right]_{k_{1D}k_{10}}\pi^{3}}}.
\end{eqnarray}
Finally, the ergodic goodput $\bar{\rho}(\rmD)$ is written in terms of the probability of the interference being severe.
\begin{eqnarray}
\nonumber \bar{\rho}(\rmD)&=& \mathbb{P}(\mathrm{severe}\; \mathrm{interference})\bar{\rho}_1(\rmD) + \left(1-\mathbb{P}(\mathrm{severe}\; \mathrm{interference})\right)\bar{\rho}_2(\rmD)\\
&=& r\bar{\rho}_1(\rmD) + \left(1-r\right)\bar{\rho}_2(\rmD).
\end{eqnarray}
Given a quantization size $N$ of the channel space, the probability of a random channel matrix falling in a Voronoi region $\cV_k$ is $\frac{1}{N}$. Hence, the probability of the interference channel $\bG_2$ falling in the Voronoi cell $\cV_{l}$ pertaining to the quantization vector $\bff_2$ is $\frac{1}{N}$. In other words, the probability of the interference being severe is $r = \frac{1}{N}$.

To capture the effect of increasing feedback delay on the ergodic system goodput, we use the notion of \emph{throughput gain}, defined in \cite{LFTCJ} as the throughput with delay D minus the throughput when the delay goes to infinity. When the delay goes to infinity, the feedback information becomes obsolete and thus irrelevant. The goodput gain is formally written as
\begin{equation}\label{eqn:DeltaR}
\Delta\bar{\rho}(\rmD) = \bar{\rho}(\rmD) - \bar{\rho}(\infty),
\end{equation}
where $\bar{\rho}(\infty)$ is given by
%\begin{footnotesize}
\begin{eqnarray}
 \bar{\rho}(\infty)  &=& %\mathbb{E}_{\bH_1[n]}\left[\txtR^{t}[n]\mathbb{P}\left(\txtR^{t}[n] \leq \txtR[n] \quad|\bH_1[n]\right)\right]\\
%\nonumber&=& \sum_{k_{10} = 1}^{N}{\txtR^{t}_{k_{10}}\mathbb{P}\left(\txtR^{t}_{k_{10}} \leq \txtR\right)\mathbb{P}(\bH_1[n] \in \mathcal{V}_{k_{10}})}\\
               \sum_{k_{10},k_{1D}}^N\sum_{k_{20},k_{2D}}^N{\rmC(\rmD)\pi^{4}}.
\end{eqnarray}
%%\end{footnotesize}
This follows from the fact that, as $\rmD \rightarrow \infty$, the channel state Markov chain converges to the stationary distribution
\begin{equation}
\mathbf{P}^{\rmD} \rightarrow \left[\boldsymbol{\pi}, \cdots, \boldsymbol{\pi}\right].
\end{equation}
%Finally, to capture the effect of increasing feedback delay on the ergodic system goodput, we use the throughput gain, defined in \cite{LFTCJ} as
The goodput gain allows us to analyze the effect of increasing feedback delay on the multicell system,
and draw conclusions as to when closed loop limited feedback MIMO systems are feasible.
We derive an upper bound on the ergodic goodput gain $\Delta\bar{\rho}(\rmD)$, based on the Markov chain convergence
rate \cite{MarkovChains, MCconvergence}.

For this reason, we invoke theorem 2.1 in \cite{MCconvergence}, to upperbound the goodput gain $\Delta\bar{\rho}(D)$ in terms of the properties of the stochastic matrix of the channel Markov chain. Theorem 2.1 in \cite{MCconvergence} states that for the ergodic channel state Markov chain, the following inequality holds \begin{equation}\label{eqn:thm_conv}
\left(\sum_{m=1}^{N}|\left[\mathrm{P}^{\rmD}\right]_{lm}-\pi_m|\right)^2 \leq \frac{\lambda^{\rmD}}{\pi_l},\quad\quad 1\leq l \leq N,
\end{equation}
where $\lambda \in [0,1]$ is the second largest eigenvalue of the matrix $\mathbf{P}\tilde{\mathbf{P}}$.
The matrix $\tilde{\mathbf{P}}$ is defined as the time reversal of the stochastic matrix $\mathbf{P}$.
%
%By manipulating (\ref{eqn:vd}) and invoking (\ref{eqn:thm_conv}), the ergodic sum rate is derived and given in the following proposition:
\begin{proposition}\label{prp:prop_1}
For fixed feedback delay of D samples, the feedback goodput gain can be bounded as
\begin{equation}\label{eqn:bound}
\Delta\bar{\rho}(\rmD) \leq a\left(\sqrt{\lambda}\right)^{2\rmD} + b\left(\sqrt{\lambda}\right)^{\rmD},
\end{equation}
where $a =  r\sum_{k_{1D},k_{2D}}{\max_{k_{10},k_{20}}{\mathrm{C}(\rmD)}\pi}$, and
$$b = r\sum_{k_{1D},k_{10},k_{2D}}{\max_{k_{20}}{\mathrm{C}(\rmD)}\pi^{2}\sqrt{\pi}} + \sum_{k_{1D}, k_{2D},k_{20}}{\max_{k_{10}}{\mathrm{C}(\rmD)}\pi^{2}\sqrt{\pi}}.$$
\end{proposition}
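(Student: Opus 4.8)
The plan is to reduce everything to the convergence inequality (\ref{eqn:thm_conv}) through a deviation expansion. First I would subtract the limit from $\bar{\rho}(\rmD)=r\bar{\rho}_1(\rmD)+(1-r)\bar{\rho}_2(\rmD)$, using that replacing every factor $[\mathbf{P}^{\rmD}]_{ij}$ by $\pi$ (which is the content of $\mathbf{P}^{\rmD}\to[\boldsymbol{\pi},\cdots,\boldsymbol{\pi}]$) sends both $\bar{\rho}_1(\infty)$ and $\bar{\rho}_2(\infty)$ to the common value $\sum\sum\mathrm{C}(\rmD)\pi^4=\bar{\rho}(\infty)$. Hence
\begin{equation}
\Delta\bar{\rho}(\rmD)=r[\bar{\rho}_1(\rmD)-\bar{\rho}_1(\infty)]+(1-r)[\bar{\rho}_2(\rmD)-\bar{\rho}_2(\infty)],
\end{equation}
and I would bound each bracket separately. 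The single tool used throughout is (\ref{eqn:thm_conv}): since $\pi_l=\pi$, taking square roots gives the row-sum bound $\sum_{m}|[\mathbf{P}^{\rmD}]_{lm}-\pi|\le(\sqrt{\lambda})^{\rmD}/\sqrt{\pi}$, and as a special case the entrywise bound $|[\mathbf{P}^{\rmD}]_{lm}-\pi|\le(\sqrt{\lambda})^{\rmD}/\sqrt{\pi}$.

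Next I would write $[\mathbf{P}^{\rmD}]_{ij}=\pi+\delta_{ij}$. For the \emph{mild} bracket only the desired-channel factor varies, so $\bar{\rho}_2(\rmD)-\bar{\rho}_2(\infty)=\sum\sum\mathrm{C}(\rmD)\pi^3\delta_{k_{1D}k_{10}}$ is linear in one deviation and contributes only at order $(\sqrt{\lambda})^{\rmD}$. For the \emph{severe} bracket both factors vary, and the exact expansion $(\pi+\delta_1)(\pi+\delta_2)-\pi^2=\pi\delta_1+\pi\delta_2+\delta_1\delta_2$ (with $\delta_1=\delta_{k_{1D}k_{10}}$, $\delta_2=\delta_{k_{2D}k_{20}}$), multiplied through by $\mathrm{C}(\rmD)\pi^2$, produces one quadratic term $\mathrm{C}(\rmD)\pi^2\delta_1\delta_2$ and two linear terms $\mathrm{C}(\rmD)\pi^3\delta_1$ and $\mathrm{C}(\rmD)\pi^3\delta_2$. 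The quadratic term is the source of the $(\sqrt{\lambda})^{2\rmD}$ decay and the linear terms feed the $(\sqrt{\lambda})^{\rmD}$ decay.

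To extract the coefficients I would take absolute values (legitimate since $\mathrm{C}(\rmD)\ge0$ and $\pi>0$), and in each term bound $\mathrm{C}(\rmD)$ by its maximum over the column (second) index carrying each deviation, which lets me apply the row-sum bound to that index while the remaining indices stay in the outer sum. For the quadratic term this invokes (\ref{eqn:thm_conv}) twice, once in $k_{10}$ and once in $k_{20}$, giving $(\sum_{k_{10}}|\delta_1|)(\sum_{k_{20}}|\delta_2|)\le(\sqrt{\lambda})^{2\rmD}/\pi$; after the prefactor $r\pi^2$ this is exactly $a(\sqrt{\lambda})^{2\rmD}$ with $a=r\sum_{k_{1D},k_{2D}}\max_{k_{10},k_{20}}\mathrm{C}(\rmD)\pi$. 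The severe $\delta_2$ term, summed over $k_{20}$, yields the first piece of $b$; the severe $\delta_1$ term and the mild term are both linear in $\delta_{k_{1D}k_{10}}$ and summed over $k_{10}$, carrying weights $r$ and $1-r$ that combine through $r+(1-r)=1$ (after the harmless $1-r\le1$) into the second piece of $b$. Summing the three estimates gives (\ref{eqn:bound}).

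The step I expect to be the main obstacle is the quadratic cross term: it is the only place where (\ref{eqn:thm_conv}) must be used twice, and the argument only closes if one keeps strict track of which indices are summed (the column indices $k_{10},k_{20}$, to which the convergence bound applies) versus maximized over inside $\mathrm{C}(\rmD)$, so that the two deviation sums genuinely deliver independent factors of $(\sqrt{\lambda})^{\rmD}$. A secondary subtlety is the merging of the severe and mild $\delta_{k_{1D}k_{10}}$ contributions, which relies on treating $\mathrm{C}(\rmD)$ uniformly across the two interference regimes and on the mixing weights summing to one.
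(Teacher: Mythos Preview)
Your proposal is correct and follows essentially the same route as the paper's proof: the paper introduces the auxiliary quantity $\mathrm{V}(\mathrm{D})=\sum\sum \mathrm{C}(\mathrm{D})\,\delta_{k_{1D}k_{10}}\delta_{k_{2D}k_{20}}\pi^{2}$ and rewrites $\Delta\bar{\rho}(\mathrm{D})=r\mathrm{V}(\mathrm{D})+\text{(linear}_1)+r\,\text{(linear}_2)$, which is exactly your deviation expansion $(\pi+\delta_1)(\pi+\delta_2)-\pi^2=\delta_1\delta_2+\pi\delta_1+\pi\delta_2$ for the severe bracket combined with the mild bracket via $r+(1-r)=1$. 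The bounding of each piece by maximizing $\mathrm{C}(\mathrm{D})$ over the relevant column index and then applying the row-sum bound from~(\ref{eqn:thm_conv}) once or twice is identical to the paper's argument, and your identification of the quadratic cross term as the sole source of the $(\sqrt{\lambda})^{2\mathrm{D}}$ rate matches precisely.
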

\begin{proof}
See Appendix. %\ref{sec:App}.
\end{proof}
The coefficients $a$ and $b$ depend on the instantaneous rate $\rmC(\rmD)$ and the stationary probability distribution $\pi$ of the Markov chain. The coefficient $a$ decreases with $r$ such that, as the codebook size increases, $a\rightarrow 0$. This causes the rate of decay of the goodput gain to approach that of the noise limited environment \cite{LFTCJ}. In that case, the coefficient $b \rightarrow \sum_{k_{1D}, k_{2D},k_{20}}{\max_{k_{10}}{\mathrm{C}(\rmD)}\pi^{2}\sqrt{\pi}}$.

We make the following observations about the conclusions in Proposition \ref{prp:prop_1}.
\begin{enumerate}
\item The feedback gain decreases at least exponentially with the feedback delay. The decreasing rate is $\lambda$ or $\sqrt{\lambda}$, depending on the values of the coefficients $a$ and $b$. The rate is thus determined by the channel coherence time and the size of the codebooks used.
\item The eigenvalue $\lambda$ is a key parameter in characterizing the behavior of the system. A larger value of $\lambda$ indicates longer channel coherence time and larger codebook size and vice versa.
    %When the Markov stochastic matrix is large, meaning that the number of codes in the quantization codebook is large, the $\lambda$ value is larger.
\item The coefficients $a$ and $b$ depend on the number of precoders used to quantize the channels spaces. The coefficient $a$ is largely governed by $r$ which increases with decreasing $N$. As the delay grows larger, for values higher than $D=10$ time samples for example, the term $(\sqrt{\lambda})^{2D}$ decreases faster than $(\sqrt{\lambda})^{D}$. This means that the second term, and thus the exponential rate $(\sqrt{\lambda})^{\rmD}$ becomes dominant and the exponential decay in the throughput gain is brought closer to that of the single cell environment.
\item Simulation results in Section \ref{sec:results} show that the upper bound, derived in proposition \ref{prp:prop_1} is tight for several cases of interest.
\end{enumerate}

%%%%%%%%%%%%%%%%%%%%%%%%%%%%%%%%%%%%%%%%%%%%%%%%%%%%%%%%%%%%%%%%%%%%%%%%%%%%%%%%%%%%%%%%%%%%%%%%%%%%%%%%%%%%%%%
%
\section{Interference cancellation at the receiver}\label{sec:InterfCancel}
The results in Section \ref{sec:analysis} suggest that the other cell interference almost doubles the exponential rate of decrease of the goodput gain, when compared with the rate of decrease computed in \cite{LFTCJ}. We propose using interference cancellation techniques at the receiver, to compensate for the rate decay. Interference cancellation techniques are suitable when the base stations have limited coordination. We require the receiver to learn both its desired effective channel and the effective interference channel from the other cell base station. We implement zero forcing (ZF) interference cancellation.

Assuming the number of receive antennas $\Nr \geq 2$, we rewrite the received signal at $M_1$ as
\begin{eqnarray}
\nonumber \by_1[n] &=& \sqrt{\frac{\alpha_1}{\Nt}}\bH_1[n]\bff_1[n-\rmD]s_1[n] + \sqrt{\frac{\alpha_2}{\Nt}}\bG_2[n]\bff_2[n-\rmD]s_2[n] + \bv_1[n]\\
&=& \sqrt{\frac{\alpha_1}{\Nt}}\bh_1[n]s_1[n] + \sqrt{\frac{\alpha_2}{\Nt}}\bg_2[n]s_2[n] + \bv_1[n].
\end{eqnarray}
where $\bh_1[n] \in \mathbb{C}^{\Nr\times 1}$ is the effective desired channel at $M_1$ from $B_1$. $\bg_2[n] \in \mathbb{C}^{\Nr\times 1}$ is the effective interference from $B_2$.
The zero forcing linear receiver effectively projects the desired signal $\bh_1[n]$ onto the subspace orthogonal to the subspace of the interference $\bg_2[n]$. Hence, the resulting signal power at the receiver is
\begin{equation}
\gamma[n]=\|\bh_1[n]\|^2|\sin(\theta_n)|^2,
\end{equation}
where $\theta_n = \angle(\bh_1[n], \bg_2[n])$ is the angle between the $\bh_1[n]$ and $\bg_2[n]$.
$|\sin(\theta_n)|^2$ can be viewed as a projection power loss factor for the zero forcing receiver.
And the corresponding instantaneous rate at the receiver is expressed as
\begin{eqnarray}
\rmR[n] = \log_2(1 + p\gamma[n]),
\end{eqnarray}
where $p = \frac{\alpha_1}{\Nt N_0}$ is the signal to noise ratio at the receiver. Thus the
instantaneous system goodput is
\begin{equation}
\rho[n] = \rmR^{\rmt}[n-\rmD]\mathbb{P}(\rmR^{\rmt}[n-\rmD] \leq \log_2(1 + p\gamma[n])).
\end{equation}
The ergodic goodput follows, similar to (\ref{eqn:ergodGood}),
\begin{eqnarray}
\nonumber \bar{\rho}(\rmD) &=& \mathbb{E}_{\bH_1[n-\rmD]}\left[\rmR^{\rmt}[n-\rmD]\mathbb{P}\left(\rmR^{t}[n-\rmD] \leq \rmR[n]\;\large|\;\bH_1[n-\rmD]\right)\right]\\
&=& \mathbb{E}_{\bH_1[n-\rmD]}\left[\rmR^{\rmt}[n-\rmD]\mathbb{P}\left(\gamma[n] \geq\frac{\|\bH_1[n-\rmD]\bff_1[n-\rmD]\|^2}{p} \;\large|\;\bH_1[n-\rmD]\right)\right].
\end{eqnarray}
The key to computing $\bar{\rho}(\rmD)$ is to evaluate the cumulative density function (cdf) of $|\sin(\theta_n)|^2$.
This will allow us to express the probability of outage as a function of the channel gains only. We propose the following lemma on the distribution of $|\sin(\theta_n)|^2$.
\begin{lemma}
The squared norm of the sine of the angle between the effective desired and interference channel,  in a limited feedback beamforming system, when the channels $\bH_1[n]$ and $\bG_2[n]$ have i.i.d. complex Gaussian entries with zero mean and unit variance, can be approximated by a beta distribution with parameters $\Nr-1$ and $1$, on the interval $[0,\;1]$,
\begin{eqnarray}
\nonumber f_{|\sin(\theta)|^2}(y) &=& \frac{1}{\beta(\Nr-1,1)}y^{\Nr -2}(1-y)^{1-1}, \quad 0\leq y\leq 1\\
&=& (M-1)y^{M-2},\quad 0\leq y\leq 1.
\end{eqnarray}
where $\beta(\Nr-1,1) = \frac{\Gamma(\Nr)}{\Gamma(\Nr-1)\Gamma(1)}$ is the beta function, defined in termed of the Gamma function $\Gamma$, and the effective channels are assumed Gaussian distributed.
\end{lemma}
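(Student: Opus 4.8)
The plan is to reduce $|\sin(\theta_n)|^2$ to a standard ratio of independent Gamma variables by exploiting the unitary (rotational) invariance of the isotropic complex Gaussian, so that the Beta density falls out with no integration. First I would establish that the effective interference channel $\bg_2[n] = \bG_2[n]\bff_2[n-\rmD]$ is isotropic and independent of the effective desired channel $\bh_1[n]$. The beamformer $\bff_2[n-\rmD]$ is chosen by $B_2$ from its \emph{own} link $\bH_2[n-\rmD]$, which is independent of the cross channel $\bG_2[n]$; hence, conditioning on any fixed unit vector $\bff_2$ and using the invariance of the i.i.d.\ $\mathcal{CN}(0,1)$ matrix $\bG_2[n]$ under right multiplication by a unitary, $\bG_2[n]\bff_2 \sim \mathcal{CN}(0,\|\bff_2\|^2\bI_{\Nr}) = \mathcal{CN}(0,\bI_{\Nr})$. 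Since this law does not depend on $\bff_2$, it survives averaging, so $\bg_2[n]$ is a genuine isotropic Gaussian vector whose direction is uniform on the unit sphere of $\mathbb{C}^{\Nr}$. Independence from $\bh_1[n]$ follows because all cell-$2$ quantities are independent of $\bH_1[n]$ (the base stations are geographically separated).

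Next I would write $|\sin(\theta_n)|^2 = 1 - \cos^2(\theta_n)$ with $\cos^2(\theta_n) = |\bh_1^*[n]\bg_2[n]|^2 / (\|\bh_1[n]\|^2\,\|\bg_2[n]\|^2)$, and condition on $\bh_1[n]$. Because $\bg_2[n]$ is isotropic and independent of $\bh_1[n]$, I may apply the unitary rotation that maps $\bh_1[n]/\|\bh_1[n]\|$ to $\bee_1$ without altering the law of $\bg_2[n]$; the squared cosine then collapses to $\cos^2(\theta_n) = |[\bg_2[n]]_1|^2 / \sum_{i=1}^{\Nr}|[\bg_2[n]]_i|^2$. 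Since this holds for every realization of $\bh_1[n]$, it holds unconditionally, and crucially the distribution of $\bh_1[n]$ itself never enters.

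The final step is the Gamma-ratio identity. The entries $|[\bg_2[n]]_i|^2$ are i.i.d.\ unit-mean exponential, i.e.\ Gamma$(1,1)$, so the numerator is Gamma$(1,1)$ and the residual sum $\sum_{i=2}^{\Nr}|[\bg_2[n]]_i|^2$ is an independent Gamma$(\Nr-1,1)$. A ratio $X/(X+Y)$ of independent Gamma$(1,1)$ and Gamma$(\Nr-1,1)$ is Beta$(1,\Nr-1)$, whence $\cos^2(\theta_n)\sim\mathrm{Beta}(1,\Nr-1)$ and $|\sin(\theta_n)|^2 = 1-\cos^2(\theta_n) \sim \mathrm{Beta}(\Nr-1,1)$. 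Its density is $\frac{1}{\beta(\Nr-1,1)}y^{\Nr-2}(1-y)^{0} = (\Nr-1)y^{\Nr-2}$, exactly the claimed form with $M = \Nr$.

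The part requiring the most care is explaining why the statement is phrased as an \emph{approximation}. The reduction above is in fact exact on the interference side, since $\bff_2[n-\rmD]$ is independent of $\bG_2[n]$. The only genuine modeling gap is on the desired side: $\bh_1[n] = \bH_1[n]\bff_1[n-\rmD]$ is treated as isotropic Gaussian even though $\bff_1[n-\rmD]$ is correlated with $\bH_1[n]$ through the temporal correlation used elsewhere in the paper. I would emphasize that this gap does \emph{not} affect $|\sin(\theta_n)|^2$, because the angle law depends only on the isotropy of $\bg_2[n]$ and its independence from $\bh_1[n]$; the ``approximately'' caveat becomes relevant only later, when $\|\bh_1[n]\|^2$ is coupled to the same variable inside the goodput integral.
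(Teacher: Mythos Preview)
Your proposal is correct and follows the same underlying mechanism as the paper---both rest on the isotropy of the effective channel vectors---but your treatment is considerably more detailed and, in one respect, sharper. The paper's proof simply asserts that $\bh_1[n]$ and $\bg_2[n]$ can each be approximated as i.i.d.\ $\mathcal{CN}(0,1)$ vectors and then invokes the known Beta law for the squared sine of the angle between two such vectors, without working through the Gamma-ratio reduction you supply. Your argument fills in that gap explicitly via rotational invariance and the $X/(X+Y)$ identity. More importantly, you observe something the paper does not: the angle distribution depends only on the isotropy of $\bg_2[n]$ and its independence from $\bh_1[n]$, so the Beta$(\Nr-1,1)$ law is in fact \emph{exact} rather than approximate, and the Gaussianity of $\bh_1[n]$ is irrelevant for this lemma. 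The paper bundles the approximation into the lemma itself, whereas you correctly locate it downstream, in the later coupling of $|\sin(\theta_n)|^2$ with $\|\bh_1[n]\|^2$ inside the goodput expression.
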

\begin{proof}
The effective channel vectors $\bh_1[n]$ and $\bg_2[n]$ are two independent random vectors whose entries, $\bh_{1i}[n]$ and $\bg_{2i}[n]$, $i=1,\cdots,\Nr$ can be approximated as independently and identically distributed complex Gaussian random variables with mean $0$ and variance $1$. Consequently, $\theta(n)$ is the angle between two independent random vectors, and $|\sin(\theta(n))|^2$ is beta distributed with parameters $\Nr-1$ and $1$.

When the number of receive antennas $\Nr =2$, $|\sin(\theta(n))|^2$ is uniformly distributed on $[0,\;1]$.
\end{proof}
Thus, a closed form expression of the complementary probability of outage
$\mathbb{P}\left(\rmR[n] \geq \rmR^{\rmt}[n-\rmD]\right)$ can be written as
%%%\begin{footnotesize}
\begin{eqnarray}
\nonumber\mathbb{P}\left(\rmR[n] \geq \rmR^{\rmt}[n-\rmD] \;|\bH_1[n-\rmD]\right) &=&
\mathbb{P}\left(\gamma[n] \geq \frac{\|\bH_1[n-\rmD]\bff_1[n-\rmD]\|^2}{p}\;\large|\;\bH_1[n-\rmD]\right)\\
\nonumber &=& \mathbb{P}\left(|\sin(\theta_n)|^2 \geq \frac{\|\bH_1[n-\rmD]\bff_1[n-\rmD]\|^2}{p\|\bH_1[n]\bff_1[n-\rmD]\|^2}\;\large|\;\bH_1[n-\rmD] \right)\\
&=& 1-\left(\frac{\|\bH_1[n-D]\bff_1[n-D]\|^2}{p\|\bH_1[n]\bff_1[n-D]\|^2}\right)^{\Nr-1}.
\end{eqnarray}
%%%\end{footnotesize}
Consequently, the ergodic system goodput for a delay D is given by
%%%\begin{footnotesize}
\begin{equation}\label{eqn:rd}
\bar{\rho}(D) = \sum_{k_{10},k_{1D}}{R^{t}_{k_{1D}}\left(1-\left(\frac{\|\bH_1[n-D]\bff_1[n-D]\|^2}{p\|\bH_1[n]\bff_1[n-D]\|^2}\; \large|\;\bH_1[n] \in \mathcal{V}_{k_{10}}, \bH_1[n-D] \in \mathcal{V}_{k_{1D}} \right)^{\Nr-1}\right)\left[\mathrm{P}^D\right]_{k_{1D}k_{10}}\pi}.
\end{equation}
%%%\end{footnotesize}
%For the extreme case where $D \rightarrow \infty$, the ergodic goodput is obtained as
%%%\begin{footnotesize}
%\begin{eqnarray}\label{eqn:rinf}
%\nonumber \bar{\rho}(\infty) &=&
%\nonumber %\sum_{k_{10}}\txtR^{t}_{k_{10}}\left((1-\frac{\|\bH_1[n]\bff_1[n]\|^2}{s\|\bH_1[n]\bff_1[n]\|^2})\quad %|\bH_1[n] \in \mathcal{V}_{k_{10}}\right)\pi\\
%&=&\sum_{k_{10}}{\txtR^{t}_{k_{10}}\left(1-\frac{1}{s}\right)\pi}.
%\end{eqnarray}
%%%\end{footnotesize}
Finally, the ZF feedback goodput gain is readily written, following equations (\ref{eqn:rd}) and (\ref{eqn:thm_conv}).
\begin{proposition}\label{prp:prop_ZF}
 For fixed feedback delay of D samples, the feedback throughput gain with ZF
interference nulling of the strongest interferer can be bounded as
%\begin{footnotesize}
\begin{eqnarray}\label{eqn:boundZF}
\nonumber\Delta\bar{\rho}(\rmD)&=&\sum_{k_{10},k_{1D}}{\rmR^{\rmt}_{k_{1D}}\left(1-\frac{\|\bH_1[n-\rmD]\bff_1[n-\rmD]\|^2}{p\|\bH_1[n]\bff_1[n-\rmD]\|^2}  \right)\left(\left[\rmP^{\rmD}\right]_{k_{1D}k_{10}}-\pi\right)\pi}\mathrm{  }\\
&\leq& \rmc \left(\lambda\right)^{\rmD},
\end{eqnarray}
%   \end{footnotesize}
where $\rmc = \sum_{k_{1D}}\sqrt{\pi_{k_{1D}}}\max_{k_{10}}{\left(\rmR^{\rmt}_{k_{1D}}\left(1-\left(\frac{\|\bH_1[n-\rmD]\bff_1[n-\rmD]\|^2}{p\|\bH_1[n]\bff_1[n-\rmD]\|^2}\right)^{\Nr-1}\right)\right)}$,
and $\lambda$ is the second largest eigen value of the matrix $\mathbf{P}\tilde{\mathbf{P}}$.
\end{proposition}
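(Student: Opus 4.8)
The plan is to mirror the structure used for Proposition~\ref{prp:prop_1}, but exploiting that after zero forcing there is only a single surviving Markov chain — that of the desired channel $\bH_1$ — because the receiver removes the interference term, so $\bG_2$ enters $\bar{\rho}(\rmD)$ only through the instantaneous factor $|\sin(\theta_n)|^2$ and no longer through a transition probability. First I would take the closed form (\ref{eqn:rd}) for $\bar{\rho}(\rmD)$ and evaluate its $\rmD\to\infty$ limit: since $\mathbf{P}^{\rmD}\to[\boldsymbol{\pi},\cdots,\boldsymbol{\pi}]$, every entry $\left[\rmP^{\rmD}\right]_{k_{1D}k_{10}}\to\pi$, so $\bar{\rho}(\infty)$ is obtained from (\ref{eqn:rd}) by the replacement $\left[\rmP^{\rmD}\right]_{k_{1D}k_{10}}\mapsto\pi$. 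Subtracting per the definition (\ref{eqn:DeltaR}) of the gain then collapses the two expressions into a single sum carrying the factor $\left(\left[\rmP^{\rmD}\right]_{k_{1D}k_{10}}-\pi\right)$, which is exactly the equality asserted on the first line of (\ref{eqn:boundZF}) (with the complementary outage probability understood to carry the exponent $\Nr-1$, as in (\ref{eqn:rd})).

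To convert this equality into the bound, I would abbreviate the coefficient multiplying the Markov factor as $g(k_{1D},k_{10}) := \rmR^{\rmt}_{k_{1D}}\left(1-\left(\frac{\|\bH_1[n-\rmD]\bff_1[n-\rmD]\|^2}{p\|\bH_1[n]\bff_1[n-\rmD]\|^2}\right)^{\Nr-1}\right)$, interpreted as the conditional expectation over the Voronoi cells $\cV_{k_{10}},\cV_{k_{1D}}$ so that it is a deterministic, bounded function of the cell indices. Grouping by $k_{1D}$ and factoring out the constant $\pi=\pi_{k_{1D}}$, I would bound the signed inner sum over $k_{10}$ by its absolute value and pull out $\max_{k_{10}}g(k_{1D},k_{10})$, giving
\[
\Delta\bar{\rho}(\rmD)\;\leq\;\pi\sum_{k_{1D}}\max_{k_{10}}g(k_{1D},k_{10})\sum_{k_{10}}\left|\left[\rmP^{\rmD}\right]_{k_{1D}k_{10}}-\pi\right|.
\]
The inner absolute sum is precisely the quantity controlled by the convergence inequality (\ref{eqn:thm_conv}): taking its square root with $l=k_{1D}$ gives $\sum_{k_{10}}\left|\left[\rmP^{\rmD}\right]_{k_{1D}k_{10}}-\pi\right|\leq (\sqrt{\lambda})^{\rmD}/\sqrt{\pi_{k_{1D}}}$. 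Substituting and combining $\pi/\sqrt{\pi_{k_{1D}}}=\sqrt{\pi_{k_{1D}}}$ produces exactly the coefficient $\rmc=\sum_{k_{1D}}\sqrt{\pi_{k_{1D}}}\max_{k_{10}}g(k_{1D},k_{10})$ stated in the proposition.

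Two points deserve care. Replacing $g$ by $\max_{k_{10}}g$ inside a \emph{signed} sum is legitimate only because $g\ge 0$; this holds since $\rmR^{\rmt}_{k_{1D}}\ge 0$ and, by the Lemma, the ratio $\frac{\|\bH_1[n-\rmD]\bff_1[n-\rmD]\|^2}{p\|\bH_1[n]\bff_1[n-\rmD]\|^2}$ is the threshold of a $|\sin(\theta_n)|^2$ tail probability and therefore lies in $[0,1]$, so $1-(\cdot)^{\Nr-1}$ is a genuine complementary outage probability in $[0,1]$; I would state this non-negativity explicitly before invoking the max. The second, and I expect the main, point is the exponent bookkeeping: the convergence theorem contributes $(\sqrt{\lambda})^{\rmD}=\lambda^{\rmD/2}$ per chain, and since only the desired chain survives nulling, the honest single-chain bound is $\rmc\,(\sqrt{\lambda})^{\rmD}$ rather than the $\rmc\,\lambda^{\rmD}$ printed in (\ref{eqn:boundZF}). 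Relative to Proposition~\ref{prp:prop_1} this is precisely the disappearance of the faster $\lambda^{\rmD}$ term (coefficient $a$, which arose from the \emph{product} of two deviation factors in the severe-interference case), reflecting that interference cancellation eliminates the second chain; I would accordingly report the decay at the single-chain rate $(\sqrt{\lambda})^{\rmD}$ and note that it coincides with the noise-limited rate of \cite{LFTCJ}.
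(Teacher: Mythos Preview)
Your argument is correct and is exactly what the paper means when it says the bound is ``readily written, following equations (\ref{eqn:rd}) and (\ref{eqn:thm_conv})''; no further proof is given in the paper, and your decomposition (subtract the $\rmD\to\infty$ limit, pull out $\max_{k_{10}}g$ using nonnegativity, then apply the square root of (\ref{eqn:thm_conv})) is precisely the single-chain specialization of the Appendix proof of Proposition~\ref{prp:prop_1}. Your exponent observation is also right: one application of (\ref{eqn:thm_conv}) produces $(\sqrt{\lambda})^{\rmD}$, not $\lambda^{\rmD}$, and indeed the paper's own Remark~1 immediately following the proposition states that ZF ``brings back the exponential rate of decrease \ldots\ to $\sqrt{\lambda}$,'' so the $\lambda^{\rmD}$ printed in (\ref{eqn:boundZF}) is a typographical slip rather than a different argument.
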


%%%\paragraph*{Proof}: See Appendix. %\ref{sec:App}.
%%
%Note that with interference cancellation, the exponential rate of decay is increased to $\sqrt{\lambda}$, similar to the case when multi-cell interference is not present\cite{LFTCJ}. Furthermore, in Section \ref{sec:results}, we run simulations
%to show how tightthe upper bounds derived are. We also compare the performance of the system with and without
%%%interference cancellation, with that of the noise-limited system of \cite{LFTCJ}.
%%
The following remarks are in order.
\begin{enumerate}
\item Employing ZF interference cancellation at the receiver brings back the exponential rate of decrease of the ergodic goodput to $\sqrt{\lambda}$, similar to the case reported in \cite{LFTCJ}, where the system is noise limited. The authors in \cite{LFTCJ}, however, derive expressions for the ergodic throughput of the system, without taking into account the rate outage. We reformulate the result in \cite{LFTCJ} in terms of the goodput metric as follows
    \begin{eqnarray}\label{eqn:boundNS}
    \Delta\bar{\rho}(\rmD)&=&\sum_{k_{10},k_{1D}}{\rmC_{\rmN}(\rmD)\left(\left[\mathrm{P}^D\right]_{k_{0D}k_{00}}-\pi\right)\pi}\mathrm{ }\leq \kappa \left(\lambda\right)^{\rmD},
\end{eqnarray}
where $\rmC_{\rmN}(\rmD) = \rmR^{\rmt}_{k_{1D}}\mathbb{P}(\rmR_{k_{1D}k_{10}}\geq \rmR^{\rmt}_{k_{1D}})$, and
$$\rmR_{k_{1D}k_{10}} = \log_2\left(1+ \frac{\alpha_1\|\bH_1[n]\bff_1[n-\rmD]\|^2}{\Nt\|\bw^*[n]\bv_1[n]\|^2}\;\large|\;\bH_1[n]\in \mathcal{V}_{k_{10}}, \bH_1[n-\rmD]\in \mathcal{V}_{k_{1D}}\right),$$
and $\kappa = \sum_{k_{1D}}{\sqrt{\pi}\max_{k_{10}}{\rmC_{\rmN}(\rmD)}}$.
\item The coefficient c depends on the effective channels at the transmitter and the receiver, and it is
smaller than the coefficient $\kappa$ in the upperbound of (\ref{eqn:boundNS}). This makes sense because the projection of the effective channel at the receiver over the subspace perpendicular to that spanned by the interference vectors, causing an effective power loss in the SNR at the receiver.
\item In Section \ref{sec:results}, we run simulations to show the tightness of the derived upper bounds. We compare the performance of the system with and without interference cancellation, with that of the noise limited environment.
\end{enumerate}
%
%%%%%%%%%%%%%%%%%%%%%%%%%%%%%%%%%%%%%%%%%%%%%%%%%%%%%%%%%%%%%%%%%%%%%%%%%%%%%%%%%%%%%%%%%%%%%%%%%%%%%%%%%%%%
%%
\section{Precoded Spatial multiplexing}\label{sec:Precoding}
Spatial multiplexing can offer higher data rates by sending multiple data streams to the receiver. Limited feedback precoded spatial multiplexing is included in the emerging 3GPP-LTE standard. The performance of this MIMO system, however, in the presence of feedback delay is yet to be analyzed.
In this section, we extend the goodput analysis in Section \ref{sec:analysis} to precoded spatial multiplexing systems, and we evaluate its performance limits in the presence of feedback delay and other cell interference.

For a precoding matrix $\bF$ with $\Ns$ transmit streams, the achievable rate at $M_1$ is computed as
\begin{equation}\label{eqn:RRXprec}
\rmR[n] = \log_2\left(\det\left(\bI_{\Nr}+ \bK_1[n](\bK_{I}[n])^{-1}\right)\right),
\end{equation}
where $\bK_1[n] = \frac{\alpha_1}{\Nt N_0}\bH_1[n]\bF_1[n-\rmD]\left(\bH_1[n]\bF_1[n-\rmD]\right)^*$ denotes the covariance matrix of the desired signal from $B_1$
$$\bK_{I}[n] = \bI_{\Nr} + \frac{\alpha_2}{\Nt N_0}\bG_2[n]\bF_2[n-\rmD](\bG_2[n]\bF_2[n-\rmD])^*$$ is the covariance matrix of the interference from $B_2$ plus the noise.

The rate at the base station is computed without taking into account the presence of the noise and the interference at the mobile, as in Section \ref{sec:analysis},
\begin{equation}\label{eqn:RTXprec}
\rmR^{\rmt}[n-\rmD] = \log_2\left(\det\left(\bI_{\Nr}+ \bH_1[n-\rmD]\bF_1[n-\rmD](\bH_1[n-\rmD]\bF_1[n-\rmD])^*\right)\right).
\end{equation}
The goodput then follows as
\begin{equation}
\rho_{\rmP}(\rmD)=\rmR^{\rmt}[n-\rmD]\cI\left(\rmR^{\rmt}[n-\rmD] \leq \rmR[n]\right).
\end{equation}
The ergodic goodput is computed in terms of the joint probability mass functions of the quantized channels $\cH$.
The quantization of the channel state information according to the capacity maximization selection criterion, as explained in Section \ref{sec:LFfeedback}, maps the channels into a first order Markov chain model, where the different states correspond to different Voronoi regions. Thus, the same analysis can be applied to computing the ergodic goodput gain for the precoded spatial multiplexing system.

Define $\rmC_{\rmP}(\rmD)$ as the instantaneous goodput at time $n$ with $\bH_1[n-D] \in \cV_{k_{1D}}$, $\bH_1[n] \in \cV_{k_{10}}$, $\bH_2[n] \in \cV_{k_{20}}$ and $\bH_2[n-\rmD] \in \cV_{k_{2D}}$. The ergodic goodput gain
$\Delta\bar{\rho}_{\rmP}(\rmD)$ can thus be obtained using equation (\ref{eqn:thm_conv}) and Proposition \ref{prp:prop_1}.
\begin{proposition}\label{prp:prop_precoding}
For fixed feedback delay of D samples, the ergodic goodput gain $\bar{\rho}_{\rmP}$ is upper bounded as
\begin{equation}
\Delta\bar{\rho}_{\rmP}(\rmD) \leq a\left(\sqrt{\lambda}\right)^{2\rmD} + b\left(\sqrt{\lambda}\right)^{\rmD}.
\end{equation}
where $a =  r\sum_{k_{1D},k_{2D}}{\max_{k_{10},k_{20}}{\mathrm{C}_{\rmP}(\rmD)}\pi}$, and
$$b = (1-r)\sum_{k_{1D},k_{10},k_{2D}}{\max_{k_{20}}{\mathrm{C}_{\rmP}(\rmD)}\pi^{2}\sqrt{\pi}}+ \sum_{k_{1D}, k_{2D},k_{20}}{\max_{k_{10}}{\mathrm{C}_{\rmP}(\rmD)}\pi^{2}\sqrt{\pi}}.$$
\end{proposition}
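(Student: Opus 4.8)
The plan is to reproduce the proof of Proposition \ref{prp:prop_1} almost verbatim, since the passage from transmit beamforming to precoded spatial multiplexing changes only the per-state instantaneous goodput, replacing $\mathrm{C}(\rmD)$ by $\mathrm{C}_{\rmP}(\rmD)$. The enabling fact, asserted in the text preceding the statement, is that the capacity-maximizing precoder rule (\ref{eqn:quantize_precode}) assigns each realization $\bH_k[n]$ to a Voronoi region and therefore induces the \emph{same} ergodic, equiprobable, first-order finite-state Markov chain on $\mathcal{I}=\{1,\dots,N\}$ used in Section \ref{sec:analysis}. Hence the joint mass function (\ref{eqn:prtransi}) and the severe/mild interference split carry over unchanged, and both $\bar{\rho}_{\rmP}(\rmD)$ and $\bar{\rho}_{\rmP}(\infty)$ take the same form in the transition probabilities $[\mathbf{P}^{\rmD}]_{k_{1D}k_{10}}$, $[\mathbf{P}^{\rmD}]_{k_{2D}k_{20}}$ and $\pi$, now weighted by $\mathrm{C}_{\rmP}(\rmD)$.

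First I would write the gain as a probability-$r$ mixture of the severe and mild contributions, $\Delta\bar{\rho}_{\rmP}(\rmD)=r\,\Delta_{\mathrm{sev}}+(1-r)\,\Delta_{\mathrm{mild}}$. Writing $p_1=[\mathbf{P}^{\rmD}]_{k_{1D}k_{10}}$ and $p_2=[\mathbf{P}^{\rmD}]_{k_{2D}k_{20}}$, the severe contribution involves $p_1p_2-\pi^2$ (after factoring out the $\pi^2$ arising from $\pi_{k_{1D}}\pi_{k_{2D}}$) while the mild contribution involves $p_1-\pi$. The central algebraic step is the identity
\begin{equation}
p_1p_2-\pi^2=(p_1-\pi)(p_2-\pi)+\pi(p_1-\pi)+\pi(p_2-\pi),
\end{equation}
which isolates one \emph{doubly-deviating} cross term and two \emph{singly-deviating} terms.

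Next I would bound each deviation via the convergence inequality (\ref{eqn:thm_conv}). Summing (\ref{eqn:thm_conv}) over the zero-delay column index and using $\pi_l=\pi$ gives $\sum_{k_{i0}}\bigl|[\mathbf{P}^{\rmD}]_{k_{iD}k_{i0}}-\pi\bigr|\le(\sqrt{\lambda})^{\rmD}/\sqrt{\pi}$. The cross term, being a product of two deviations, then contributes $(\sqrt{\lambda})^{2\rmD}/\pi$ once both $k_{10}$ and $k_{20}$ are summed; carrying the $\pi^2$ prefactor and extracting $\mathrm{C}_{\rmP}(\rmD)$ as $\max_{k_{10},k_{20}}\mathrm{C}_{\rmP}(\rmD)$ yields exactly $a(\sqrt{\lambda})^{2\rmD}$ with $a=r\sum_{k_{1D},k_{2D}}\max_{k_{10},k_{20}}\mathrm{C}_{\rmP}(\rmD)\,\pi$. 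Each singly-deviating term, as well as the mild term, carries a single factor $(\sqrt{\lambda})^{\rmD}/\sqrt{\pi}$ after summing over its zero-delay index; collecting the $\pi(p_2-\pi)$ piece (max over $k_{20}$) and the $\pi(p_1-\pi)$ piece together with the mild term (max over $k_{10}$), each with the residual factor $\pi^{2}\sqrt{\pi}$, assembles into the coefficient $b$.

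The step needing the most care is the order of operations: for each term one must first replace $\mathrm{C}_{\rmP}(\rmD)$ by its maximum over precisely the zero-delay indices being summed against the deviations, and only then invoke (\ref{eqn:thm_conv}) summed over those same indices. Summing the deviations over the column index---rather than bounding each entry of $\mathbf{P}^{\rmD}$ in isolation---is what prevents spurious factors of $N$ and delivers the coefficients in the displayed form. This is the only substantive point, and it is identical to the corresponding step for Proposition \ref{prp:prop_1}; no genuinely new difficulty arises, because all precoding-specific content is absorbed into the nonnegative weight $\mathrm{C}_{\rmP}(\rmD)$, which enters the estimates only as an opaque scalar.
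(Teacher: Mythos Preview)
Your proposal is correct and follows exactly the paper's approach: the paper does not give a separate proof for this proposition but simply invokes equation (\ref{eqn:thm_conv}) and Proposition \ref{prp:prop_1}, with $\mathrm{C}_{\rmP}(\rmD)$ substituted for $\mathrm{C}(\rmD)$, which is precisely the plan you outline (including the decomposition $p_1p_2-\pi^2=(p_1-\pi)(p_2-\pi)+\pi(p_1-\pi)+\pi(p_2-\pi)$ implicit in the Appendix). One caveat: if you actually carry out the bookkeeping you sketch, the $\max_{k_{20}}$ term in $b$ emerges with prefactor $r$ (as in Proposition \ref{prp:prop_1}) rather than the $(1-r)$ printed in the statement here, so the displayed $b$ appears to contain a typo rather than a result your argument should reproduce.
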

The coefficients $a$ and $b$ depend on the instantaneous rate $\rmC_{\rmP}(\rmD)$, the stationary probability distribution $\pi$ of the Markov chain, as well as the codebook size $N$.
The achievable rate of the precoded spatial multiplexing is dependent on the feedback delay and the interference from the neighboring channels. The decay rate depends on the amount of delay, the strength of the interference and the number of quantization levels. In Section \ref{sec:results}, we present numerical results to show the performance of the precoded spatial multiplexing system with respect to delay.

\section{Simulation Results}\label{sec:results}
In this section, we present simulation results to evaluate the performance of
limited feedback beamforming MIMO systems with delay in the presence of
other cell interference, for several scenarios.
%We then present the results of applying ZF interference cancellation at the receiver. We finally show the effect of other cell interference and delay on precoded spatial multiplexing systems, and compare the performance of limited feedback methods for different codebook sizes and various Doppler frequencies.

For our simulations, we assume that the scattering environment is uniform such that the channel
coefficients are $\mathcal{CN}(0,1)$ and the temporal correlation follows Clarke's model and is
characterized by the continuous first order Bessel function $r(\tau) = \mathcal{J}_0(2\pi f_d\tau)$, where
$f_d$ is the maximum Doppler frequency and $\tau$ is the time separation between the samples. The discrete time counterpart of the continuous time autocorrelation function is implemented using the inverse Discrete Fourier Transform method proposed in \cite{BeaulieuRayleighMethod}, such that the discrete time samples have autocorrelation $r(n) = \mathcal{J}_0(2\pi\frac{f_d}{f_s}n)$, where $n$ is the sample separation, and $f_s$ is the sampling rate, and both the continuous and the discrete correlation functions have the same power spectrum.

\subsection{Limited feedback beamforming with other cell interference and feedback delay}
We compare the feedback rate gain of the interference limited system with that of the noise
limited system, for the limited feedback beamforming presented in Section \ref{sec:analysis}.
Figure \ref{fig:cpm_intf_nointf} plots both the throughput and goodput gains of the systems
versus the feedback delay D. For the throughput gain, we use the achievable rate at the receiver given by
\begin{equation}
\rmR[n] = \log_2\left(1 + \frac{\alpha_1\|\bH_1[n]\bff_1[n-\rmD]\|^2}{{\alpha_2\|\bw^*[n]\bG_2[n]\bff_2[n-\rmD]\|^2} + N_t\|\bw^*[n]\bv[n]\|^2}\right),
\end{equation}
i.e., without including the rate outage. For the noise limited environment,
the result for the system throughput goes back to that in \cite{LFTCJ} where the exponential rate of
decrease is $\sqrt{\lambda}$.

Figure \ref{fig:cpm_intf_nointf} shows the effect of the feedback delay on both the throughput
gain $\Delta\bar{\rmR}(\rmD)$, as well as the goodput $\Delta\bar{\rho}(\rmD)$, for a four transmit, four receive antenna system with a normalized Doppler shift of $0.025$.  We observe that including
the rate outage at the transmitter causes a decrease in the throughput gain of the system. The decrease is constant with respect to the delay.  This can be explained by the fact that the outage does not affect the Markov chain statistical properties, which mainly depend on the value of the delay. The gap between the
goodput and the throughput gain can be observed
for both the noise limited and interference limited scenarios.

Moreover, the results in Figure \ref{fig:cpm_intf_nointf} show that the exponential rate of decrease of the interference limited system (one cell interference) is more pronounced than that of the noise limited system (single-cell) for the cases of small to moderate delay (up to D $= 15$). The simulation results agree with the analytical result in Proposition \ref{prp:prop_1}. We also observe from Figure \ref{fig:cpm_intf_nointf} that the rate of decrease in the two-cell environment increases with the delay. This is explained by the fact that for $\lambda\leq 1$, the $\sqrt{\lambda}$ term in the right hand side of equation (\ref{eqn:bound}) becomes more dominant as D increases.

To evaluate the accuracy of the upper bound derived in Proposition \ref{prp:prop_1} in predicting the performance of the system, Figure \ref{fig:cmp_upperbound} plots the normalized feedback rate gain and the expression in (\ref{eqn:bound}), as a function of feedback delay D, for a normalized Doppler shift, $f_dT_s = 0.025$ . As can be observed from the Figure, the throughput gain is tightly upper bounded by (\ref{eqn:bound}), for delay intervals of interest. This upper bound is different from that in the noise limited scenario, plotted in the Figure for comparison.
This allows the use of the results of Proposition \ref{prp:prop_1} for computing gains in achievable rate for delay values of interest.

\subsection{ZF cancellation at the receiver}
The ZF linear receiver was proposed in Section \ref{sec:InterfCancel} to reduce the effect of other cell interference.
Figure \ref{fig:GoodputZFNSOCI} presents the effect of employing the ZF receiver architecture on the feedback throughput gain. We plot the ergodic goodput achieved by the noise limited (single-cell) MIMO transceiver, with that of the interference limited system, with and without interference cancellation. We observe that adding interference cancellation brings the decay rate of the feedback throughput gain back to $\sqrt{\lambda}$, similarly to the noise limited environment. The effective power loss due to the projection of the desired MIMO channel onto the interference subspace, causes the achievable rate to be less than that of the noise-limited environment.

Figure \ref{fig:GoodputZF} compares the normalized feedback goodput gain $\Delta\bar{\rho}(\rmD)$, and its approximation with the closed form upperbound in Proposition \ref{prp:prop_ZF}, for different values of the feedback delay D. The exact and closed form expression for the $\Delta\bar{\rho}(\rmD)$ are shown in solid and dashed lines respectively. The approximate $\Delta\bar{\rho}(\rmD)$ solution simulated based on the probability of outage is also shown in the Figure. One can observe that the closed form expression closely matches the exact $\Delta\bar{\rho}(\rmD)$.

%Furthermore, the increase, from near zero values, in the feedback gain observed for low values of D can be explained by the fact that for small delay D, the effective channel gains at the receiver $\|\bH_1[n]\bF_1[n-\rmD]\|^2$ and at the transmitter $\|\bH_1[n-\rmD]\bF_1[n-\rmD]\|^2$ are almost the same. Hence, the projection power loss at the receiver causes the instantaneous rate at the transmitter to be lower than that at the receiver, causing the goodput to go to zero. However, as the delay increases, the feedback gain decreases, as expected, with a rate of $\sqrt{\lambda}$.

\subsection{Precoded spatial multiplexing with feedback delay and other cell interference}
We investigate the performance of the precoded spatial multiplexing system in the presence of other cell interference in Figure \ref{fig:cpm_precod}. We consider a four transmit, four receive antenna MIMO system, with two spatial multiplexing streams $N_s = 2$ and a codebook size of 16. Figure \ref{fig:cpm_precod} plots the throughput gain for the precoded multistream system for both the interference limited and the noise limited environments, for a Grassmannian codebook of size 16\footnote{The codebooks used in these simulations are available at \url{http://cobweb.ecn.purdue.edu/~djlove/grass.html}} . We observe from the figure that the goodput gain of the precoded multistream decreases exponentially with the feedback delay. This exponential decrease is more pronounced in the presence of other cell interference, as predicted in Proposition \ref{prp:prop_precoding}.

Figure \ref{fig:cpm_precod_beam} plots the throughput gain of the precoded spatial multiplexing system and that of transmit beamforming, for the same codebook. The behavior of both systems vis-a-vis other cell interference and delay is the same. However, the throughput gain achieved by the precoded system is higher than that of transmit beamforming, due to the multiple transmit streams.

\subsection{Effect of doppler frequency and codebook size}
The performance of the two-cell system in the presence of delay with varying codebook sizes is shown in Figure \ref{fig:cmp_codebook_sizes} for a two transmit, two receive MIMO system and codebook sizes $4$, $8$ and $16$. One can clearly see that the rate of decay of the goodput gain increases as the codebook size increases, it approaches that of the single cell system for larger codebook sizes. This suggests a tradeoff between the feedback rate and the goodput gain in interference limited scenarios.

In Figure \ref{fig:comp_shifts}, we plot the feedback goodput gain versus the feedback delay for different Doppler frequencies normalized by the sampling period $T_s$, $f_dT_s =\{20\times10^{-3}, 25\times10^{-3}, 50\times10^{-3}, 100\times10^{-3}\}$. We consider a four transmit, four receive MIMO system with a Grassmannian codebook size of $16$. Clearly, the exponential rate of decrease of the feedback gain is sensitive to the Doppler shift. As the Doppler shift, or the velocity of the mobile is increased, the feedback gain is decreased, resulting in a steeper decrease rate with respect to the feedback delay.
This can be explained as follows. Over a fixed delay, higher Doppler causes larger channel variation, which accelerates
the decrease of the feedback capacity gain with the feedback delay and hence results in a steeper curve slope.

\subsection{3GPP-LTE design example}
In this subsection, we present a design example that demonstrates the application of the results in this paper to designing limited feedback system for 3GPP LTE-Advanced standard \cite{3GPPLTEAUTRA}. The standard proposes using limited feedback beamforming and precoded spatial multiplexing over orthogonal frequency division multiple access (OFDMA). For the downlink OFDMA, the available frequency bandwidth is partitioned into frequency slots, called frequency subbands, and assigned to different users based on a given scheduling algorithm. Each subband consists of several orthogonal frequency division multiplexing (OFDM) symbols.  Limited feedback is performed on every OFDM symbol to increase its data rate.
Channel estimation is performed using pilot symbols located at the center of the subbands. One CSI feedback link is required for every subband, and feedback is performed every two subframes.

\begin{table}[h]
\centering{
\caption{LTE design Specifications}
  \label{tb:tb_specs}
\begin{tabular}{|c|c|}
  \hline
  % after \\: \hline or \cline{col1-col2} \cline{col3-col4} ...
  Carrier Frequency & 2 GHz\\
  Bandwidth & 10 MHz \\
  Antenna Array & $\Nt = \Nr = 4$ \\
  Subband Bandwidth & 1.08 MHz \\
  Subframe (TTI) length & 1 msec \\
  CSI Feedback (2 TTI) & 2 msec \\
  Control Delay & 4 msec \\
  Mobility & 30 km/h\\
  CSI codebook size & N = 16\\
  \hline
\end{tabular}
}
\end{table}

The design specifications are summarized in Table \ref{tb:tb_specs}. The transmission bandwidth is
10 MHz at a carrier frequency of 2 GHz. Downlink transmission is organized into radio frames with a
frame duration of 10 ms.  We consider the frame structure applicable to frequency division duplex (FDD),
in which 10 subframes, of duration 1 ms each, are available for downlink transmission \cite{3GPPLTEAUTRA}.
At each scheduled user equipment (UE), the channel state information is fed back every 2 subframes, or 2 ms. The design of the precoder for 4 transmit antennas is based on the Householder transformation \cite{golub1996matrix}. The choice of the codebook is based  on its reduced computational complexity. The Householder precoding codebook satisfies a constant modulus where all the transmit antennas keep the same power level, regardless of which precoding matrix is used to maximize the power amplifier efficiency. Moreover, the nested property of the codewords permits the precoding matrix in a lower rank (such as with transmit beamforming) to be a submatrix of a higher rank precoding matrix (when sending multiple streams). The codebook also features a constrained alphabet that avoids the need for matrix multiplication \cite{juho2009mimo}.

The size of the codebook used for quantizing feedback CSI is 16. The mobility of the UE is up to 30 km/h. The doppler shift for the maximum speed of 30 km/h is of $f_m = 55.5$ Hz. Normalizing this shift by the subframe time of 1 ms, we get a normalized Doppler shift of $f_mT_s = 0.055$. The control delay is of 4 ms. The roundtrip ARQ delay is of 4 ms. Based on these requirements, we compute the goodput gain of the limited feedback precoding systems, using the LTE Householder codebooks\footnote{The codebooks used in these simulations are available at \url{http://users.ece.utexas.edu/~inoue/codebook/index.html}}, for realistic delay values of 4 and 6 ms.

The second largest eigenvalue used in Proposition \ref{prp:prop_1} is first computed by using the stochastic matrix of the feedback state markov chain, based on the codebook size and the normalized Doppler shift to be $\lambda = 0.7721$. The maximum feedback goodput gain for delay free CSI feedback is computed as 2.453 bps/Hz. By using Proposition \ref{prp:prop_1} for the LTE codebook, with $N_s = 1$ stream, the normalized goodput gain is computed at (4, 6) ms respectively as (0.4708, 0.2904), this implies that the goodput gain at these values is  (1.1549, 0.7124) bps per subcarrier, and (83.1528, 51.2928) bps per subband.

%Next, the maximum fixed delay for each CSI feedback link
%is computed as follows. Corresponding to the codebook size
%of and the normalized Doppler shift of
%, the eigenvalue defined in Proposition 4 is computed to
%be by using the stochastic matrix of the feedback-
%state Markov chain. By substituting this result into (27),
%the maximum fixed feedback delay is obtained as
%samples, corresponding to the normalized feedback
%throughput gain of 50, 60, 70 %, respectively. Equivalently,
%millisecond since the symbol
%rate is 1 MHz. Last, the maximum feedback throughput gain for
%delay-free CSI feedback is computed to be 1.2 b/s/Hz by using
%(19). Therefore, the required normalized feedback throughput
%gains of 50, 60, 70 are converted into the actual values of
%b/s/Hz. This corresponds to the increase in
%throughput equal to Mb/s per subchannel, and
%Mb/s over the whole 10-MHz spectrum. The
%system parameters computed above are summarized in Table I.

%%%%%%%%%%%%%%%%%%%%%%%%%%%%%%%%%%%%%%%%%%%%%%%%%%%%%%%%%%%%%%%%%%%%%%%%%%%%%%%%%%%%%%%%%%%%%%%%%%%%%%%%%%%%%%%%%%%%%
\section{Conclusion}\label{sec:Conclusion}
In this paper, we analyzed the effect of the feedback delay on limited feedback systems in the presence of uncoordinated other cell interference. We showed that the delay on the feedback channel causes the base station to transmit at a rate higher than the instantaneous mutual information of the channel, resulting in a rate outage. By only considering the successful transmissions over the channel, or the goodput, we showed that the other cell interference causes the exponential rate of decay of the ergodic feedback gain to double, when compared to the noise-limited single cell scenario, especially at low to moderate feedback delay values. The analysis was carried out for both transmit beamforming and precoded spatial multiplexing systems. Numerical results confirmed that a smaller time correlation and a lower codebook size leads to a faster reduction of the capacity gain with the feedback delay.

We then implemented interference cancellation, assuming that the receiver has perfect knowledge of the interference channel, while the base station is oblivious to the presence of any interference. We used the zero forcing receiver architecture, and we showed that the effect of the other cell interference can be simply mitigated, by bringing the decay rate of the throughput gain back to that of the noise limited case, at the expense of a rate loss due to the projection power loss of the ZF receiver.

This paper opens up several issues for future investigation. First, the results in this work focus on the feedback delay as the major
 bottleneck in the performance, these results can be extended to include other nuisance sources
 on the channel, such as errors in estimating the CSI at the receiver, and noise on the feedback channel. Future
 work should also include extending the results to multi-user MIMO scenarios and multiple access (MAC) channels.
 More importantly, this paper sets the grounds for implementing a joint rate adaptation and codebook design technique
 that takes the rate outage caused by the other cell interference as the major constraint, and optimizes the system
 goodput accordingly.
%%%%%%%%%%%%%%%%%%%%%%%%%%%%%%%%%%%%%%%%%%%%%%%%%%%%%%%%%%%%%%%%%%%%%%%%%%%%%%%%%%%%%%%%%%%%%%%%%%%%%%%%%%%%%%%%%%
%
\appendices
\section{Proof of proposition (\ref{prp:prop_1})}\label{app:app1}
\textit{Proposition} \ref{prp:prop_1}:
For fixed feedback delay of D samples, the feedback goodput gain can be bounded as
\begin{equation}\label{eqn:bound}
\Delta\bar{\rho}(\rmD) \leq a\left(\sqrt{\lambda}\right)^{2\rmD} + b\left(\sqrt{\lambda}\right)^{\rmD},
\end{equation}
where $a =  r\sum_{k_{1D},k_{2D}}{\max_{k_{10},k_{20}}{\mathrm{C}(\rmD)}\pi}$, and
$$b = r\sum_{k_{1D},k_{10},k_{2D}}{\max_{k_{20}}{\mathrm{C}(\rmD)}\pi^{2}\sqrt{\pi}} + \sum_{k_{1D}, k_{2D},k_{20}}{\max_{k_{10}}{\mathrm{C}(\rmD)}\pi^{2}\sqrt{\pi}}.$$
%%\end{proposition}
%
\begin{proof}
We define
\begin{equation}
\mathrm{V}(D) = \sum_{k_{10},k_{1D}}{\sum_{k_{20},k_{2D}}{\mathrm{C}(\rmD)\left(\left[\mathrm{P}^{\rmD}\right]_{k_{1D}k_{10}}-\pi\right)\pi\left(
\left[\mathrm{P}^{\rmD}\right]_{k_{2D}k_{20}}-\pi\right)\pi}},
\end{equation}
and we expand it in terms of $\Delta\bar{\rho}_1(\rmD)$ as follows
\begin{eqnarray}
\nonumber \mathrm{V}(\rmD) &=& \Delta\bar{\rho}_1(\rmD) - (\sum_{k_{10},k_{1D}}{\sum_{k_{20},k_{2D}}{\mathrm{C}(\rmD)\left[\mathrm{P}^{\rmD}\right]_{k_{1D}k_{10}}\pi^3}} - \sum_{k_{10},k_{1D}}{\sum_{k_{20},k_{2D}}{\mathrm{C}(\rmD)\pi^4\mathrm{ }}})\\
\nonumber    &-& (\sum_{k_{10},k_{1D}}{\sum_{k_{20},k_{2D}}{\mathrm{C}(\rmD)\left[\mathrm{P}^{\rmD}\right]_{k_{2D}k_{20}}\pi^3}} - \sum_{k_{10},k_{1D}}{\sum_{k_{20},k_{2D}}{\mathrm{C}(\rmD)\pi^4\mathrm{ }}}).
\end{eqnarray}
Consequently, $\Delta\bar{\rho}(\rmD)$ is written as
\begin{eqnarray}
\nonumber \Delta\bar{\rho}(\rmD) &=& r\mathrm{V}(\rmD)\\
\nonumber     &+& (\sum_{k_{10}, k_{1D}}{\sum_{k_{20}, k_{2D}}{ \mathrm{C}(\rmD)\left[\mathrm{P}^{\rmD}\right]_{k_{1D}k_{10}}\pi^{3}}} - \sum_{k_0, k_{0D}}{\sum_{k_{2D}, k_{20}}{\mathrm{C}(\rmD)\pi^{4}}}\mathrm{ })\\
    &+& r(\sum_{k_{10}, k_{1D}}{\sum_{k_{20}, k_{2D}}{ \mathrm{C}(\rmD)\left[\mathrm{P}^{\rmD}\right]_{k_{2D}k_{20}}\pi^{3}}} - \sum_{k_{10}, k_{1D}}{\sum_{k_{20}, k_{2D}}{\mathrm{C}(\rmD)\pi^{4}}}\mathrm{ }).
%&+& (1-\beta)(\sum_{k_0, k_{0D}}\cdots\sum_{k_{N_b0}, %k_{N_bD}}{\mathrm{C}(D)\left[\mathbf{P}^D\right]_{k_{0D}k_{0}}\pi^{2(N_b)+1}})\\
%%&-& (1-\beta)(\sum_{k_0, k_{0D}}\cdots\sum_{k_{N_b0}, %k_{N_bD}}{\mathrm{C}(D)\pi^{2(N_b+1)}})
\end{eqnarray}
Using (\ref{eqn:thm_conv}),
 $\rmV(\rmD)$ can be upperbounded as follows
\begin{eqnarray}
\nonumber \rmV(\rmD)&=& \sum_{k_{10}, k_{0D}}{\sum_{k_{20}, k_{2D}}{\mathrm{C}(\rmD)(\left[\mathrm{P}^{\rmD}\right]_{k_{1D}k_{10}}-\pi)(\left[\mathrm{P}^{\rmD}\right]_{k_{2D}k_{20}}-\pi)\pi^{2}}}\\
&\leq& \nonumber \sum_{k_{1D}}{\sum_{k_{2D}}{\max_{k_{10},k_{20}}{\mathrm{C}(\rmD)}\sum_{k_{10},k_{20}}(\left[\mathrm{P}^{\rmD}\right]_{k_{1D}k_{10}}-\pi)(\left[\mathrm{P}^{\rmD}\right]_{k_{2D}k_{20}}-\pi)
\pi^{2}}}\\
\nonumber &\leq&\sum_{k_{1D}}{\sum_{k_{2D}}{\max_{k_{10},k_{20}}{\mathrm{C}(\rmD)}\pi(\sqrt{\lambda})^{\rmD}(\sqrt{\lambda})^{\rmD}}}\\
&=& c_1(\sqrt{\lambda})^{2\rmD},
\end{eqnarray}
where $c_1 = \sum_{k_{1D}}{\sum_{k_{2D}}{\max_{k_{10},k_{20}}{\mathrm{C}(\rmD)}\pi}}$.

Similarly upper bounding the other terms yields expressions in $(\sqrt{\lambda})^{\rmD}$
\begin{eqnarray}
\nonumber \sum_{k_{10}, k_{1D}}{\sum_{k_{20},
k_{2D}}{\mathrm{C}(\rmD)(\left[\mathbf{P}^{\rmD}\right]_{k_{1D}k_{10}}-\pi)\pi^3}}
&\leq& \sum_{k_{1D},k_{20},k_{2D}}{\max_{k_{10}}{\mathrm{C}(\rmD)}\sum_{k_{10}}(\left[\mathbf{P}^{\rmD}\right]_{k_{1D}k_{10}}-\pi)\pi^3}\\
&\leq& \sum_{k_{1D},k_{20},k_{2D}}{\max_{k_{10}}{\mathrm{C}(\rmD)}\pi^{2}\sqrt{\pi}(\sqrt{\lambda})^{\rmD}}\\
&=& c_2(\sqrt{\lambda})^{\rmD},
\end{eqnarray}
where $c_2 =  \sum_{k_{1D},k_{20},k_{2D}}{\max_{k_{10}}{\mathrm{C}(\rmD)}\pi^{2}\sqrt{\pi}}$,
and
\begin{eqnarray}
\nonumber \sum_{k_{10}, k_{1D}}{\sum_{k_{20},
k_{2D}}{\mathrm{C}(\rmD)(\left[\mathbf{P}^{\rmD}\right]_{k_{2D}k_{20}}-\pi)\pi^3}}
&\leq& c_3(\sqrt{\lambda})^{\rmD}
\end{eqnarray}
with $c_3 = \sum_{k_{2D},k_{10},k_{1D}}{\max_{k_{20}}{\mathrm{C}(\rmD)}\pi^{2}\sqrt{\pi}}$.

Finally, the expression $\bar{\rho}(\rmD)$ can be upperbounded by
\begin{equation}
\bar{\rho}(\rmD) \leq a(\sqrt{\lambda})^{2\rmD} + b(\sqrt{\lambda})^{\rmD},
\end{equation}
where $a =  rc_1 = r\sum_{k_{1D},k_{2D}}{\max_{k_{10},k_{20}}{\mathrm{C}(D)}\pi}$, and
$$b =rc_3 + c_2 =  r\sum_{k_{1D},k_{10},k_{2D}}{\max_{k_{20}}{\mathrm{C}(\rmD)}\pi^{2}\sqrt{\pi}}+ \sum_{k_{1D}, k_{2D},k_{20}}{\max_{k_{10}}{\mathrm{C}(\rmD)}\pi^{2}\sqrt{\pi}}.$$
\end{proof}

%%

%%%%%%%%%%%%%%%%%%%%%%%%%%%%%%%%%%%%%%%%%%%%%%%%%%%%
%

\bibliographystyle{IEEEtran}
\bibliography{IEEEabrv,ReferenceFile, Takaoabrv}

% Generated by IEEEtran.bst, version: 1.12 (2007/01/11)
\begin{thebibliography}{10}
\providecommand{\url}[1]{#1}
\csname url@samestyle\endcsname
\providecommand{\newblock}{\relax}
\providecommand{\bibinfo}[2]{#2}
\providecommand{\BIBentrySTDinterwordspacing}{\spaceskip=0pt\relax}
\providecommand{\BIBentryALTinterwordstretchfactor}{4}
\providecommand{\BIBentryALTinterwordspacing}{\spaceskip=\fontdimen2\font plus
\BIBentryALTinterwordstretchfactor\fontdimen3\font minus
  \fontdimen4\font\relax}
\providecommand{\BIBforeignlanguage}[2]{{%
\expandafter\ifx\csname l@#1\endcsname\relax
\typeout{** WARNING: IEEEtran.bst: No hyphenation pattern has been}%
\typeout{** loaded for the language `#1'. Using the pattern for}%
\typeout{** the default language instead.}%
\else
\language=\csname l@#1\endcsname
\fi
#2}}
\providecommand{\BIBdecl}{\relax}
\BIBdecl

\bibitem{AGoldsmith}
A.~Goldsmith, \emph{Wireless Communications}.\hskip 1em plus 0.5em minus
  0.4em\relax Cambridge University Press, 2005.

\bibitem{JeffWimax}
J.~G. Andrews, A.~Ghosh, and R.~Muhamed, \emph{Fundamentals of
  {WiMAX}:Understanding Broadband Wireless Networking}.\hskip 1em plus 0.5em
  minus 0.4em\relax Prentice Hall, 2007.

\bibitem{OverviewLFWC}
D.~J. {Love}, R.~W. {Heath}, Jr., V.~K.~N. {Lau}, D.~{Gesbert}, B.~{Rao}, and
  M.~{Andrews}, ``An overview of limited feedback in wireless communication
  systems,'' \emph{{IEEE} J. Select. Areas Commun.}, vol.~26, no.~8, pp.
  1341--1365, October 2008.

\bibitem{LoveSpatialMultiplex}
D.~J. {Love} and R.~W. {Heath}, Jr., ``Limited feedback unitary precoding for
  spatial multiplexing systems,'' \emph{{IEEE} Trans. Inform. Theory}, vol.~51,
  no.~8, pp. 2967--2976, August 2005.

\bibitem{VQbook}
A.~Gersho and R.~M. Gray, \emph{Vector Quantization and Signal
  Compression}.\hskip 1em plus 0.5em minus 0.4em\relax Kluwer, Boston, 1992.

\bibitem{BeamFiniterateFeedback}
K.~K. Mukkavilli, A.~Sabharwal, E.~Erkip, and B.~Aazhang, ``On beamforming with
  finite rate feedback in multiple antenna systems,'' \emph{{IEEE} Trans.
  Inform. Theory}, vol.~49, pp. 2562--79, October 2003.

\bibitem{Grassmanlove}
D.~J. Love, R.~W. Heath, Jr., and T.~Strohmer, ``Grassmanian beamforming for
  multiple input multiple output wireless systems,'' \emph{{IEEE} Trans.
  Inform. Theory}, vol.~49, pp. 2735--47, October 2003.

\bibitem{Narula}
A.~Narula, M.~J. Lopez, M.~D. Trott, and G.~W. Wornell, ``Efficient use of side
  information in multiple antenna data transmission over fading channels,''
  \emph{{IEEE} J. Select. Areas Commun.}, vol.~16, no.~8, October 1998.

\bibitem{Giannakis}
X.~P. and G.~B. {Giannakis}, ``Design and analysis of transmit-beamforming
  based on limited-rate feedback,'' \emph{{IEEE} Trans. Signal Processing},
  vol.~54, no.~5, pp. 1853--1863, May 2006.

\bibitem{RaoB}
J.~Roh and B.~Rao, ``Design and analysis of {MIMO} spatial multiplexing systems
  with quantized feedback,'' \emph{{IEEE} Trans. Signal Processing}, vol.~54,
  no.~8, pp. 2874--2886, August 2006.

\bibitem{RVQ}
W.~Santipach and M.~L. Honig, ``Asymptotic performance of {MIMO} wireless
  channels with limited feedback,'' \emph{Proc., IEEE MILCOM}, October 2003.

\bibitem{mondal2006channel}
B.~Mondal and R.~W. Heath, ``Channel adaptive quantization for limited feedback
  {MIMO} beamforming systems,'' \emph{{IEEE} Trans. Signal Processing},
  vol.~54, no.~12, pp. 4717--4729, Dec. 2006.

\bibitem{yang2007transmission}
J.~Yang and D.~B. Williams, ``Transmission subspace tracking for {MIMO} systems
  with low-rate feedback,'' \emph{{IEEE} Trans. Commun.}, vol.~55, no.~8, pp.
  1629--1639, Aug. 2007.

\bibitem{FSMC}
H.~S. Wang and N.~Moayeri, ``Finite-state {Markov} channel-a useful model for
  radio communication channels,'' \emph{{IEEE} Trans. Veh. Technol.}, vol.~44,
  no.~1, pp. 163--171, Feb. 1995.

\bibitem{LFTCJ}
K.~Huang, R.~W. Heath, and J.~G. Andrews, ``Limited feedback beamforming over
  temporally-correlated channels,'' \emph{{IEEE} Trans. Signal Processing},
  vol.~57, no.~5, pp. 1959--1975, May 2009.

\bibitem{CapacityInterStreams}
H.~T. Nguyen, J.~B. Andersen, and G.~F. Pedersen, ``Capacity and performance of
  {MIMO} systems under the impact of feedback delay,'' in \emph{Proc. of {IEEE}
  Int. Symp. on Personal, Indoor and Mobile Radio Commun.}, vol.~1, Sep. 5--8,
  2004, pp. 53--57.

\bibitem{FeedbackDelayModel2}
J.~C. Roh and B.~D. Rao, ``An efficient feedback method for {MIMO} systems with
  slowly time-varying channels,'' in \emph{Proc. of {IEEE} Wireless Commun. and
  Networking Conf.}, vol.~2, Mar. 21--25, 2004, pp. 760--764.

\bibitem{FeedbackDelayModel3}
F.~{K}altenberger, D.~{S}acristan {M}urga, T.~{P}ascual {I}serte, and
  A.~{P}erez {N}eira, ``{L}ow-rate feedback for real measured temporally
  correlated {MIMO} channels,'' in \emph{{NEWCOM}-{AC}o{RN} {J}oint {W}orkshop,
  {B}arcelona, {M}arch 30-{A}pril 1, 2009}, Mar 2009.

\bibitem{BobGlobecomm}
R.~C. Daniels, K.~Mandke, K.~T. Truong, S.~M. Nettles, and R.~W. Heath,
  ``Throughput/delay measurements of limited feedback beamforming in indoor
  wireless networks,'' in \emph{Proc. of {IEEE} Global Telecom. Conf.}, Nov.
  2008, pp. 1--6.

\bibitem{LFDelayConf}
K.~Huang, B.~Mondal, R.~W. Heath, and J.~G. Andrews, ``Effect of feedback delay
  on multi-antenna limited feedback for temporally-correlated channels,'' in
  \emph{Proc. of {IEEE} Global Telecom. Conf.}, Nov. 2006, pp. 1--5.

\bibitem{MarkovKroenecker}
S.~H. Ting, K.~Sakaguchi, and K.~Araki, ``A {Markov-Kronecker} model for
  analysis of closed-loop {MIMO} systems,'' \emph{{IEEE} Commun. Lett.},
  vol.~10, no.~8, pp. 617--619, Aug. 2006.

\bibitem{GiannakisPrediction}
S.~Zhou and G.~B. Giannakis, ``How accurate channel prediction needs to be for
  transmit-beamforming with adaptive modulation over rayleigh {MIMO}
  channels?'' \emph{{IEEE} Trans. Wireless Commun.}, vol.~3, no.~4, pp.
  1285--1294, Jul. 2004.

\bibitem{FiniteRateRao}
Y.~Isukapalli and B.~D. Rao, ``Finite rate feedback for spatially and
  temporally correlated {MISO} channels in the presence of estimation errors
  and feedback delay,'' in \emph{Proc. of {IEEE} Global Telecom. Conf.}, Nov.
  26--30, 2007, pp. 2791--2795.

\bibitem{KobayashiDelay}
K.~Kobayashi, T.~Ohtsuki, and T.~Kaneko, ``{MIMO} systems in the presence of
  feedback delay,'' in \emph{Proc. of {IEEE} Int. Conf. on Commun.}, vol.~9,
  Jun. 2006, pp. 4102--4106.

\bibitem{LauV}
T.~Wu and V.~K.~N. Lau, ``Robust rate, power and precoder adaptation for slow
  fading {MIMO} channels with noisy limited feedback,'' \emph{{IEEE} Trans.
  Wireless Commun.}, vol.~7, no.~6, pp. 2360--2367, Jun. 2008.

\bibitem{Aggarwal}
R.~Aggarwal, P.~Schniter, and C.~Koksal, ``Rate adaptation via link-layer
  feedback for goodput maximization over a time-varying channel,'' \emph{{IEEE}
  Trans. Wireless Commun.}, vol.~8, no.~8, pp. 4276--4285, Aug. 2009.

\bibitem{Kobayashi}
M.~Kobayashi and G.~Caire, ``Joint beamforming and scheduling for a
  multi-antenna downlink with imperfect transmitter channel knowledge,''
  \emph{{IEEE} J. Select. Areas Commun.}, vol.~25, no.~7, pp. 1468--1477, Sep.
  2007.

\bibitem{Shamai}
S.~Shamai and B.~M. Zaidel, ``Enhancing the cellular downlink capacity via
  co-processing at the transmitting end,'' in \emph{Proc. of {IEEE} Veh.
  Technol. Conf. - Spring}, vol.~3, May 6--9, 2001, pp. 1745--1749.

\bibitem{3GPPLTEA1}
{R1-083110- Alcatel Lucent}, ``{MIMO and {CQI}-techniques for
  {LTE}-advanced},'' \emph{{3GPP TSG RAN WG1 54bis Jeju, Korea}}, 2009.

\bibitem{3GPPLTEA2}
{R1-082886- Samsung}, ``{Inter-cell Interference mitigation through limited
  coordination},'' \emph{{3GPP TSG RAN WG1 54bis Jeju, Korea}}, 2009.

\bibitem{Sim_CellMIMO}
S.~Catreux, P.~F. Driessen, and L.~J. Greenstein, ``Simulation results for an
  interference-limited multiple-input multiple-output cellular system,''
  \emph{{IEEE} Commun. Lett.}, vol.~4, no.~11, pp. 334--336, Nov. 2000.

\bibitem{FundWireless}
D.~Tse and P.~Viswanath, \emph{Fundamentals of Wireless Communication}.\hskip
  1em plus 0.5em minus 0.4em\relax Cambridge University Press, 2005.

\bibitem{InterferenceCellular}
J.~G. Andrews, ``Interference cancellation for cellular systems: a contemporary
  overview,'' \emph{{IEEE} Wireless Commun. Mag.}, vol.~12, no.~2, pp. 19--29,
  Apr. 2005.

\bibitem{CostelloMiller}
S.~Lin, D.~Costello, and M.~Miller, ``Automatic-repeat-request error-control
  schemes,'' \emph{{IEEE} Commun. Mag.}, vol.~22, no.~12, pp. 5--17, Dec. 1984.

\bibitem{Wyner94}
A.~D. Wyner, ``Shannon-theoretic approach to a {Gaussian} cellular
  multiple-access channel,'' \emph{{IEEE} Trans. Inform. Theory}, vol.~40,
  no.~6, pp. 1713--1727, Nov. 1994.

\bibitem{MarkovModelsK}
K.~Huang, B.~Mondal, R.~W. Heath, and J.~G. Andrews, ``Markov models for
  limited feedback {MIMO} systems,'' in \emph{Proc. of {IEEE} Int. Conf. on
  Acoustics, Speech and Signal Process.}, vol.~4, May 14--19, 2006, p.~IV.

\bibitem{MarkovChains}
P.~{Bremaud}, \emph{Markov Chains}.\hskip 1em plus 0.5em minus 0.4em\relax
  Springer, 1999.

\bibitem{MCconvergence}
J.~Fill, ``Eigenvalue bounds on convergence to stationarity for non-reversible
  {Markov} chain, with an application to the exclusion process,'' \emph{The
  annals of applied probability}, vol.~1, pp. 62--87, January 1991.

\bibitem{BeaulieuRayleighMethod}
D.~J. Young and N.~C. Beaulieu, ``The generation of correlated rayleigh random
  variates by inverse discrete {Fourier} transform,'' \emph{{IEEE} Trans.
  Commun.}, vol.~48, no.~7, pp. 1114--1127, Jul. 2000.

\bibitem{3GPPLTEAUTRA}
{3GPP TR 36.814}, ``{Further advancements for E-UTRA physical layer aspects},''
  \emph{{Third Generation Partnership Project; Technical specification Group
  Radio Access Network}}.

\bibitem{golub1996matrix}
G.~Golub and C.~Van~Loan, \emph{{Matrix computations}}.\hskip 1em plus 0.5em
  minus 0.4em\relax Johns Hopkins Univ Pr, 1996.

\bibitem{juho2009mimo}
J.~Lee, J.-K. Han \emph{et~al.}, ``{MIMO Technologies in 3GPP LTE and
  LTE-Advanced},'' \emph{EURASIP Journal on Wireless Communications and
  Networking}, vol. 2009, 2009.

\end{thebibliography}

%
%\newpage
%%%%%%%%%%%%%%%%%%%%%%%%%%%%%%%%%%%%%%%%%%%%%%%%%%%%%
%
%\centering{\large{\textbf{Figures}}
%%%
\begin{figure}[h]
  \begin{center}
    \includegraphics[scale=0.35]{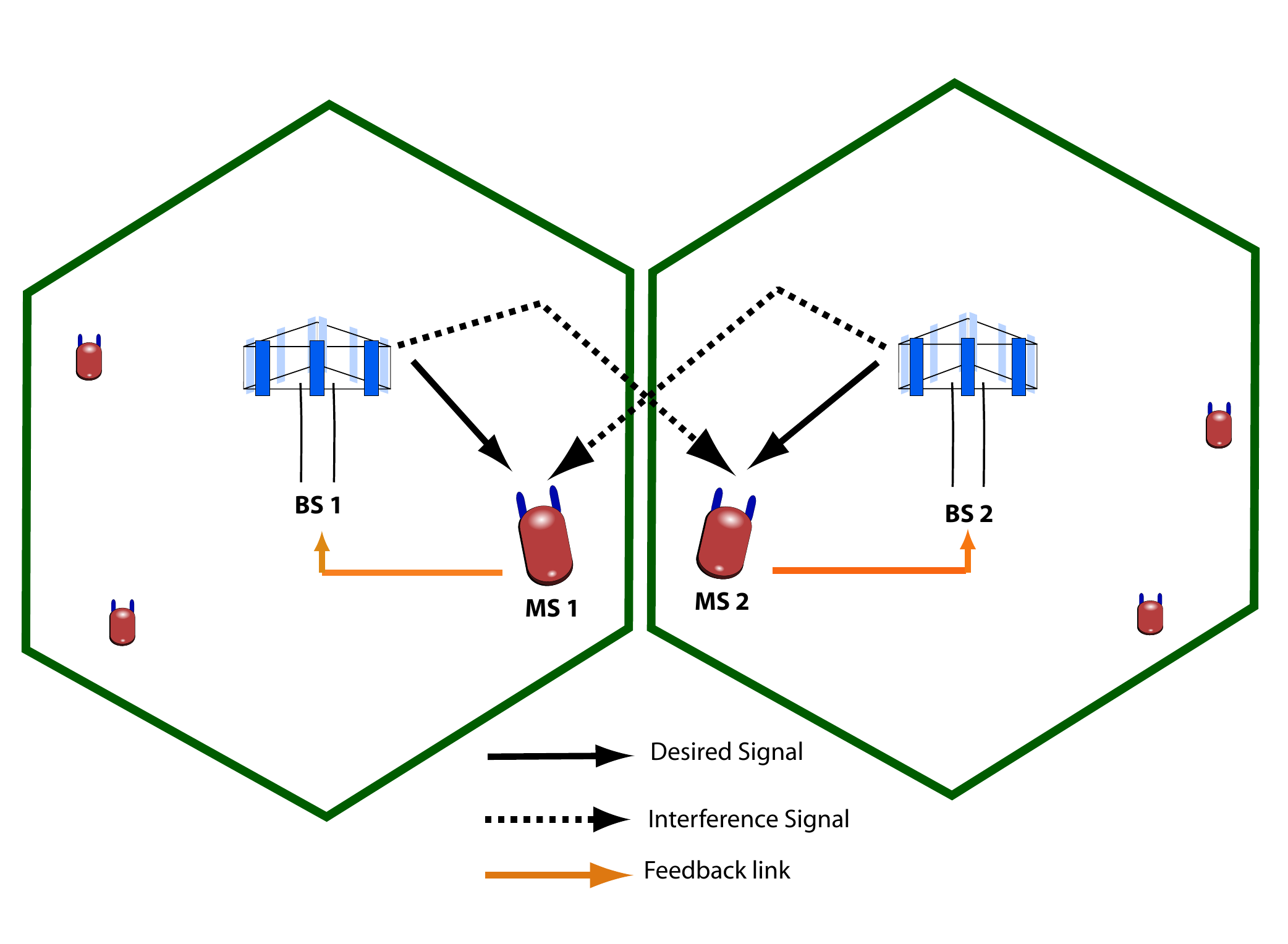}
    \caption{In the downlink scenario, the mobile user
    $MS_1$ experiences out of cell interference from $BS_2$. $\bG_2$ is the channel between $BS_2$ and $MS_1$, $\bH_1$ is the channel between BS1 and MS1. The beamforming vector associated with $\bG_2$ is $\bF_2$ corresponding to $\bH_2$ at $MS_2$.}
    \label{fig:cell_sys}
  \end{center}
\end{figure}
%%%
\begin{figure}[h]
  \begin{center}
    \includegraphics[scale = 0.35]{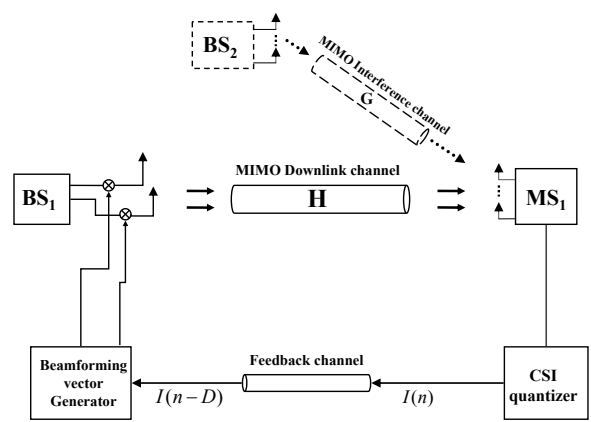}
    \caption{Limited feedback beamforming system with out of cell interference}
    \label{fig:LF_sys}
  \end{center}
\end{figure}
%

%%%% figure 3

\begin{figure}[h]
 \begin{center}
   \includegraphics[scale = 0.35]{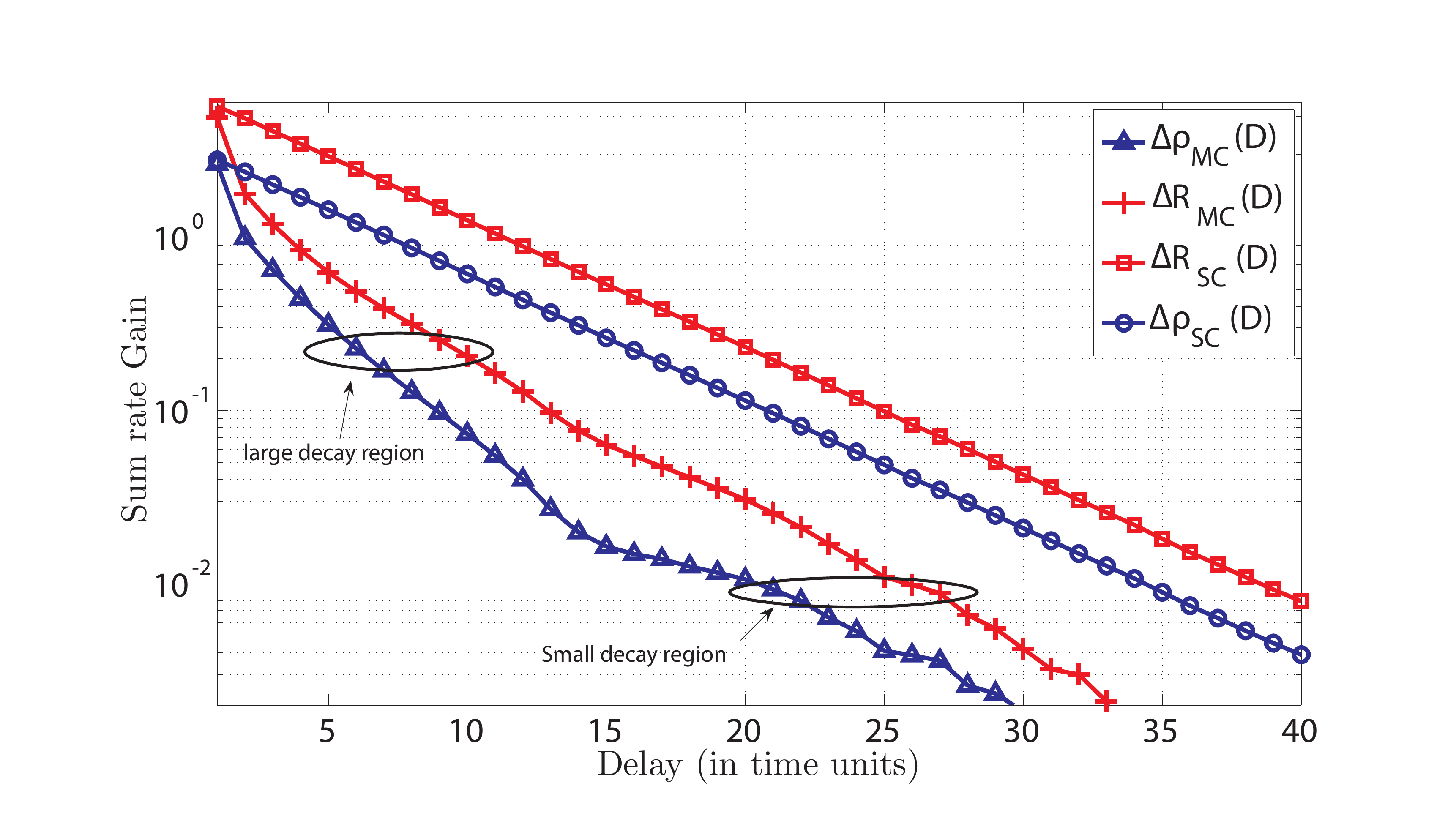}
    \caption{Effect of feedback delay on the ergodic sum rate gain.
    Comparison of an interference limited system versus a noise limited system for a $4\times 4$ MIMO system and a codebook size of $16$. The normalized Doppler shift is $f_dT_s = 0.025$. The subscript SC denotes single cell, while MC denotes multicell.}
    \label{fig:cpm_intf_nointf}
  \end{center}
\end{figure}

%%% figure 4

\begin{figure}[h]
 \begin{center}
   \includegraphics[scale = 0.35]{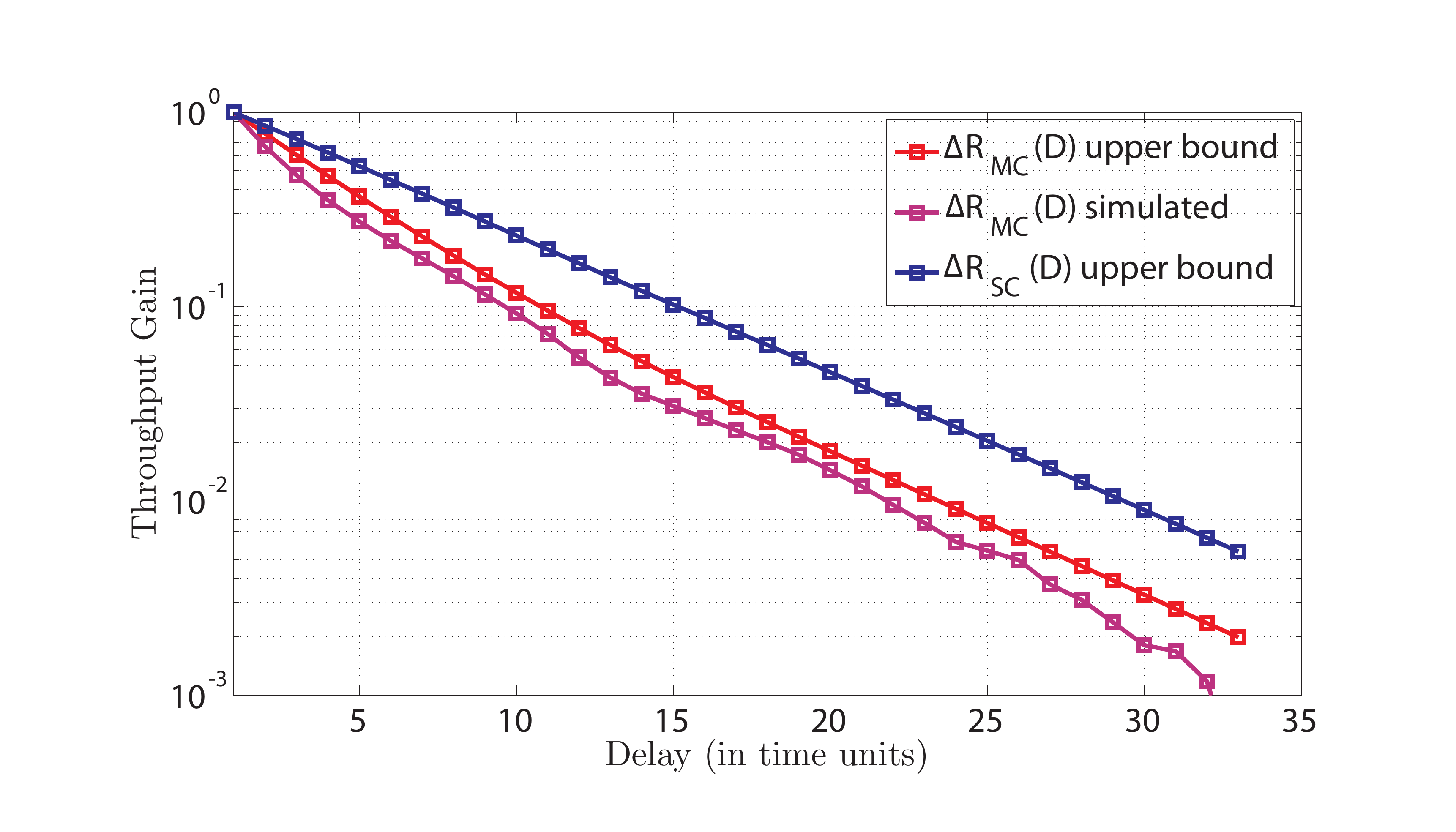}
    \caption{The normalized feedback throughput gain and its upperbound versus feedback delay for a Doppler shift $f_dT_s = 0.025$. The system considered is a four transmit, four receive MIMO system with a code size of 16. The normalized feedback throughput gain for the single cell scenario is given by $\Delta \rmR_{sc}(\rmD)$.}
    \label{fig:cmp_upperbound}
  \end{center}
\end{figure}
%
%%% figure 5
\begin{figure}[h]
  \begin{center}
    \includegraphics[scale=0.35]{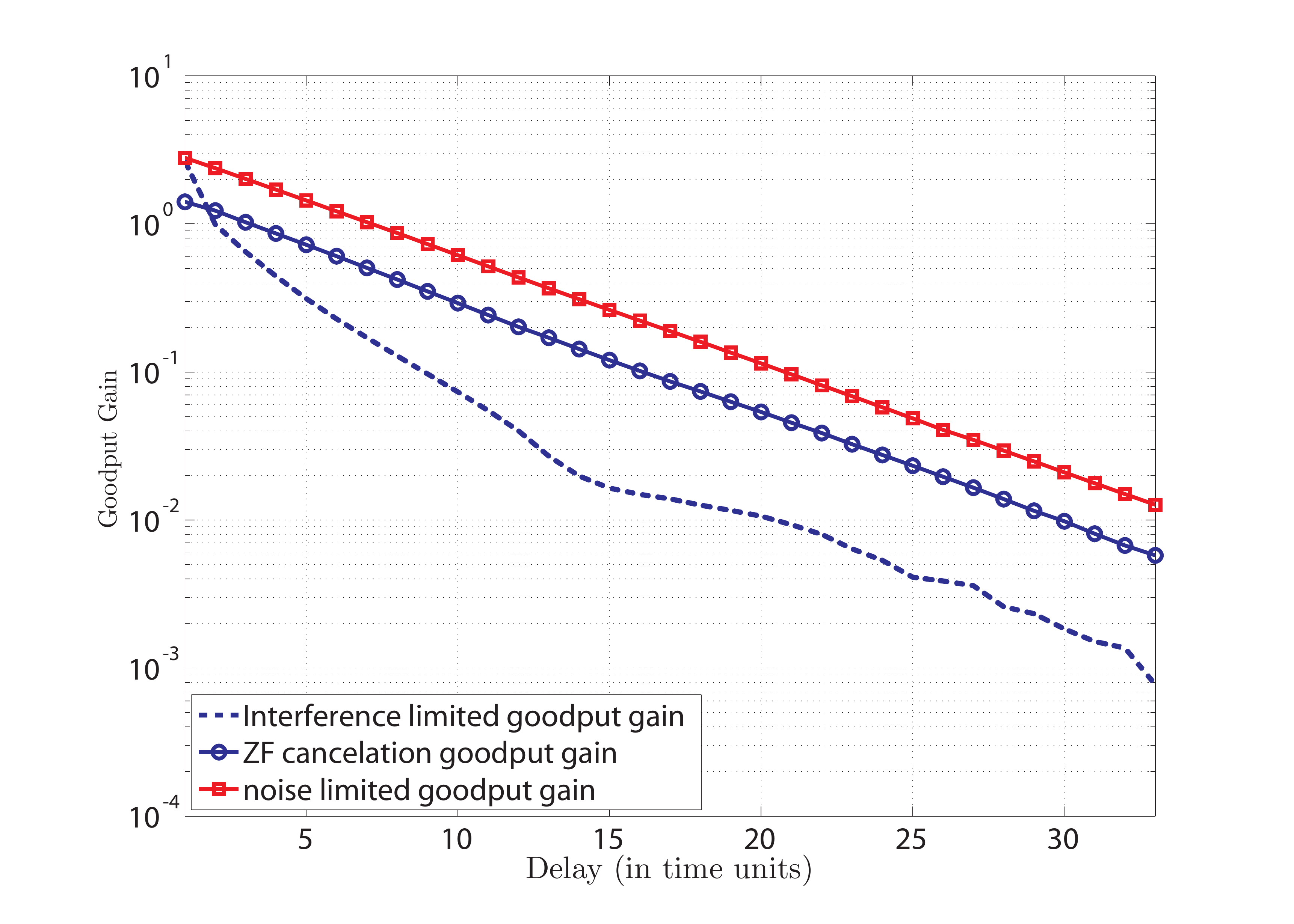}
    \caption{The feedback goodput gain versus feedback delay for the noise limited, interference limited and interference cancellation systems. The MIMO system considered is a $4\times 4$ MIMO system with a code size of $16$. The normalized Doppler frequency $f_dT_s = 0.025$.}
    \label{fig:GoodputZFNSOCI}
  \end{center}
\end{figure}
%

%%% figure 6
\begin{figure}[t]
  \begin{center}
    \includegraphics[scale=0.35]{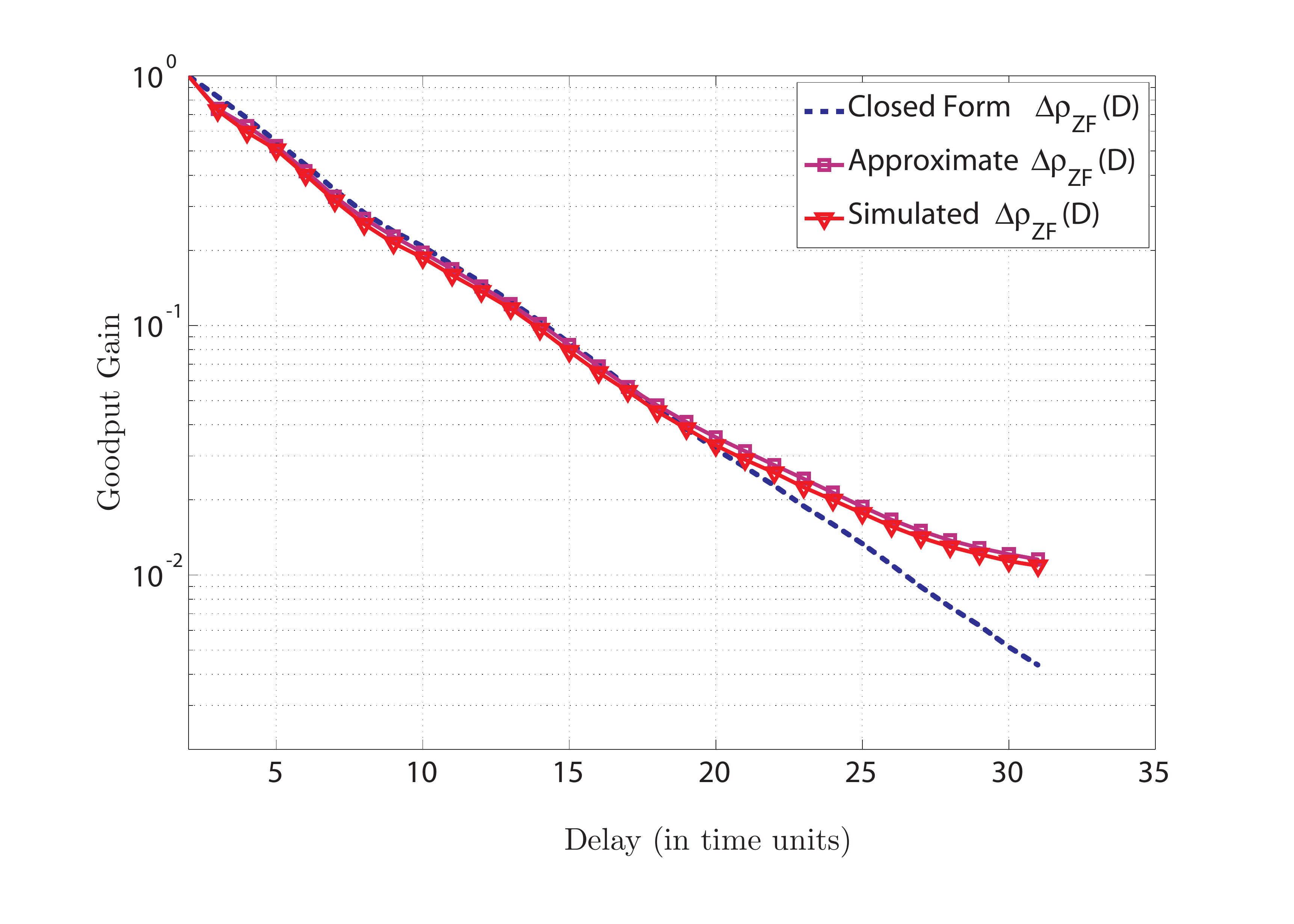}
        \caption{The feedback goodput gain and its approximations versus feedback delay for the ZF interference cancellation receiver. The system considered is a $4\times 4$ MIMO system with a code size of $16$.}
    \label{fig:GoodputZF}
  \end{center}
\end{figure}

%%% figure 7
\begin{figure}[h]
 \begin{center}
   \includegraphics[scale = 0.35]{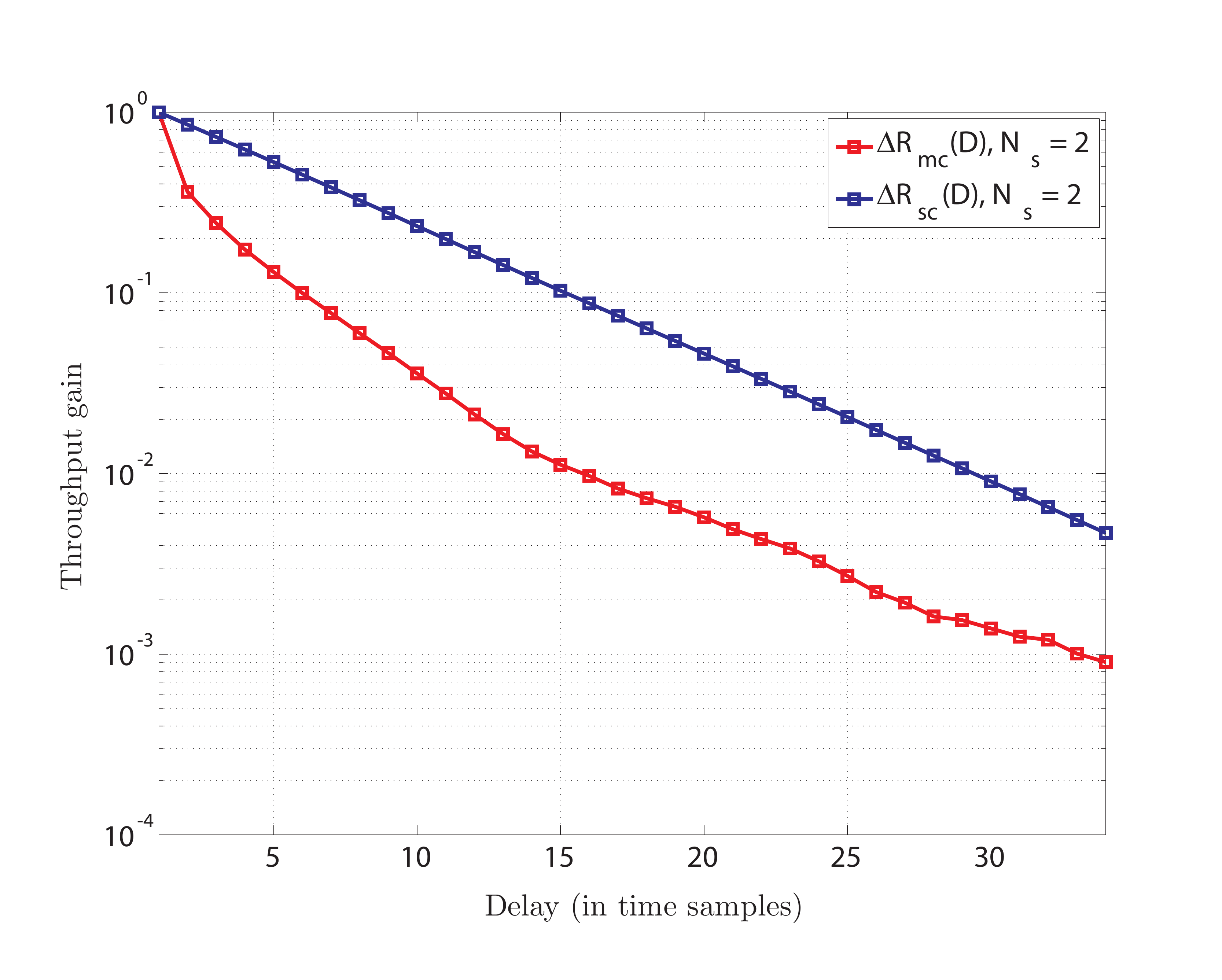}
    \caption{Effect of feedback delay on the ergodic sum rate gain for a precoded spatial multiplexing system.
    Comparison of an interference limited system versus a noise limited system for a $4\times 4$ MIMO system and a codebook size of $16$. The codebook used is a Grassmannian codebook with $N_s = 2$ spatial multiplexing streams.}
    \label{fig:cpm_precod}
  \end{center}
\end{figure}

%%% figure 8
\begin{figure}[h]
 \begin{center}
   \includegraphics[scale = 0.35]{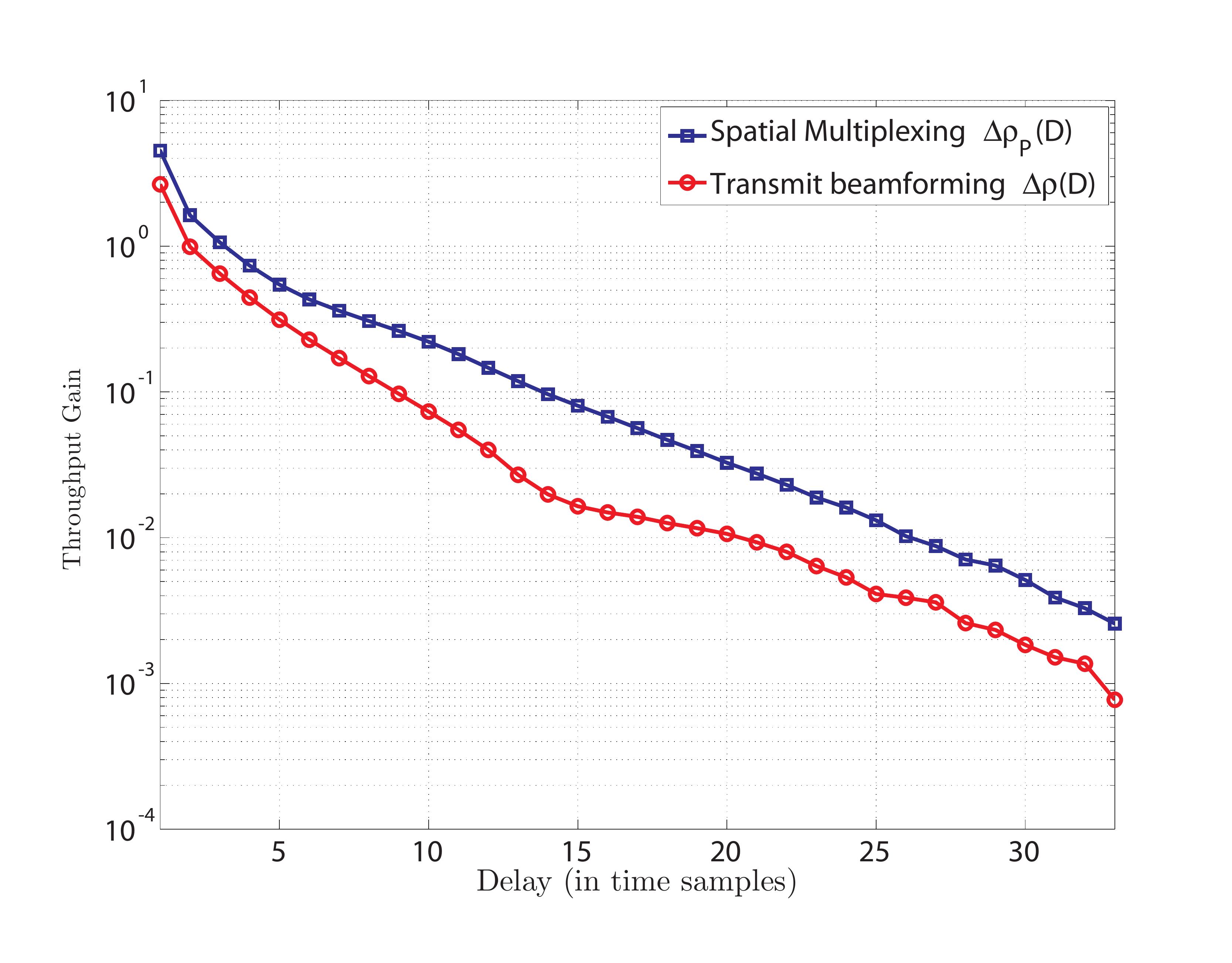}
    \caption{Comparison of the effect of feedback delay on the ergodic goodput gain for a precoded spatial multiplexing system and a transmit beamforming system.
    The system is interference limited with $4\times 4$ MIMO system and a codebook size of $16$. The codebook used is a Grassmannian codebook with $N_s = 2$ spatial multiplexing streams.}
    \label{fig:cpm_precod_beam}
  \end{center}
\end{figure}

%%% figure 9

\begin{figure}[t]
  \begin{center}
    \includegraphics[scale=0.35]{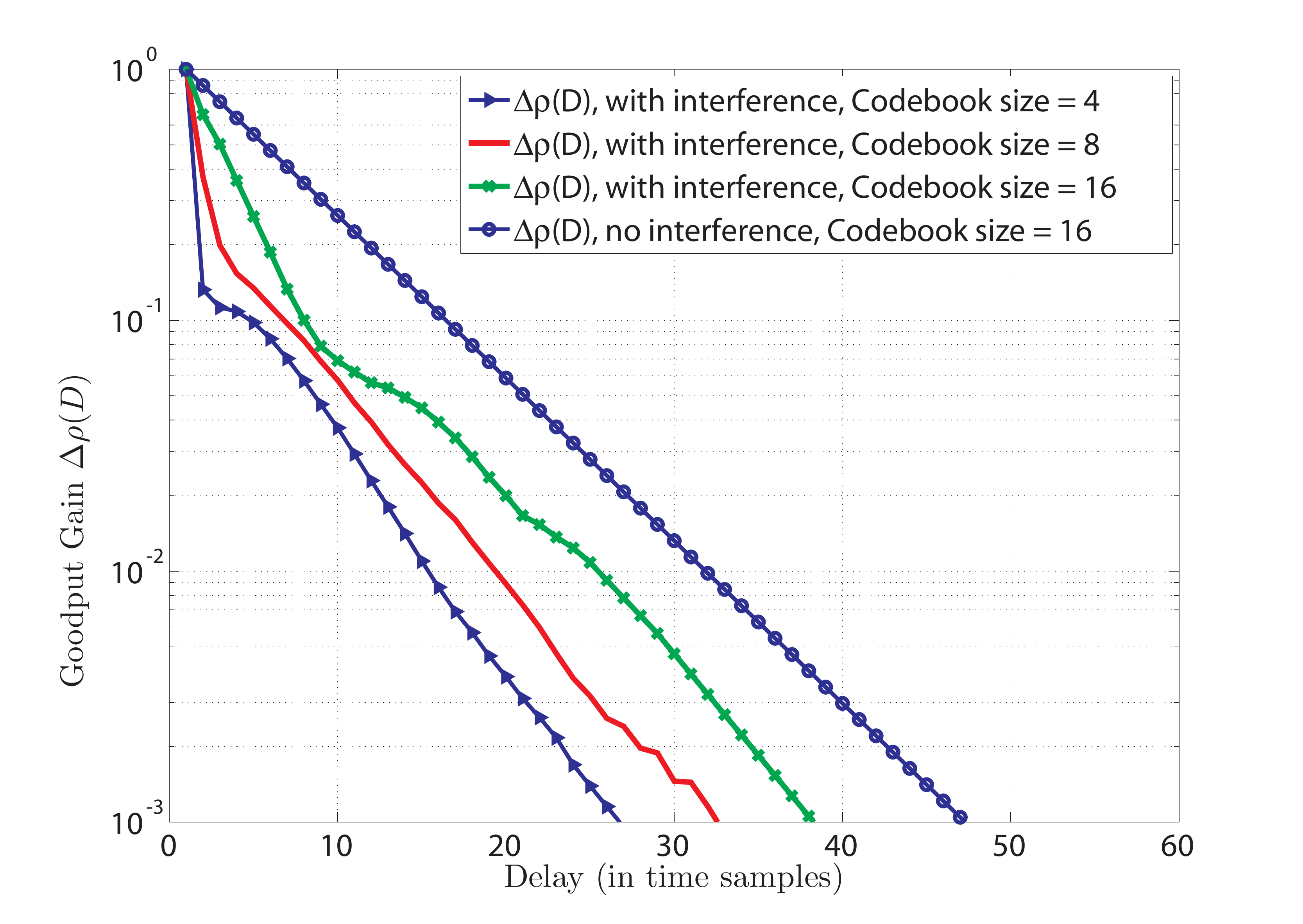}
    \caption{Effect of feedback delay on the feedback goodput gain of the beamforming system with limited feedback with varying codebook sizes. The system considered is a $2\times 2$ MIMO system with varying codebook sizes, and a normalized Doppler shift of $f_dT_s = 0.02$.}
    \label{fig:cmp_codebook_sizes}
  \end{center}
\end{figure}
%
%
%
%%% figure 10

%
\begin{figure}[t]
  \begin{center}
    \includegraphics[scale=0.35]{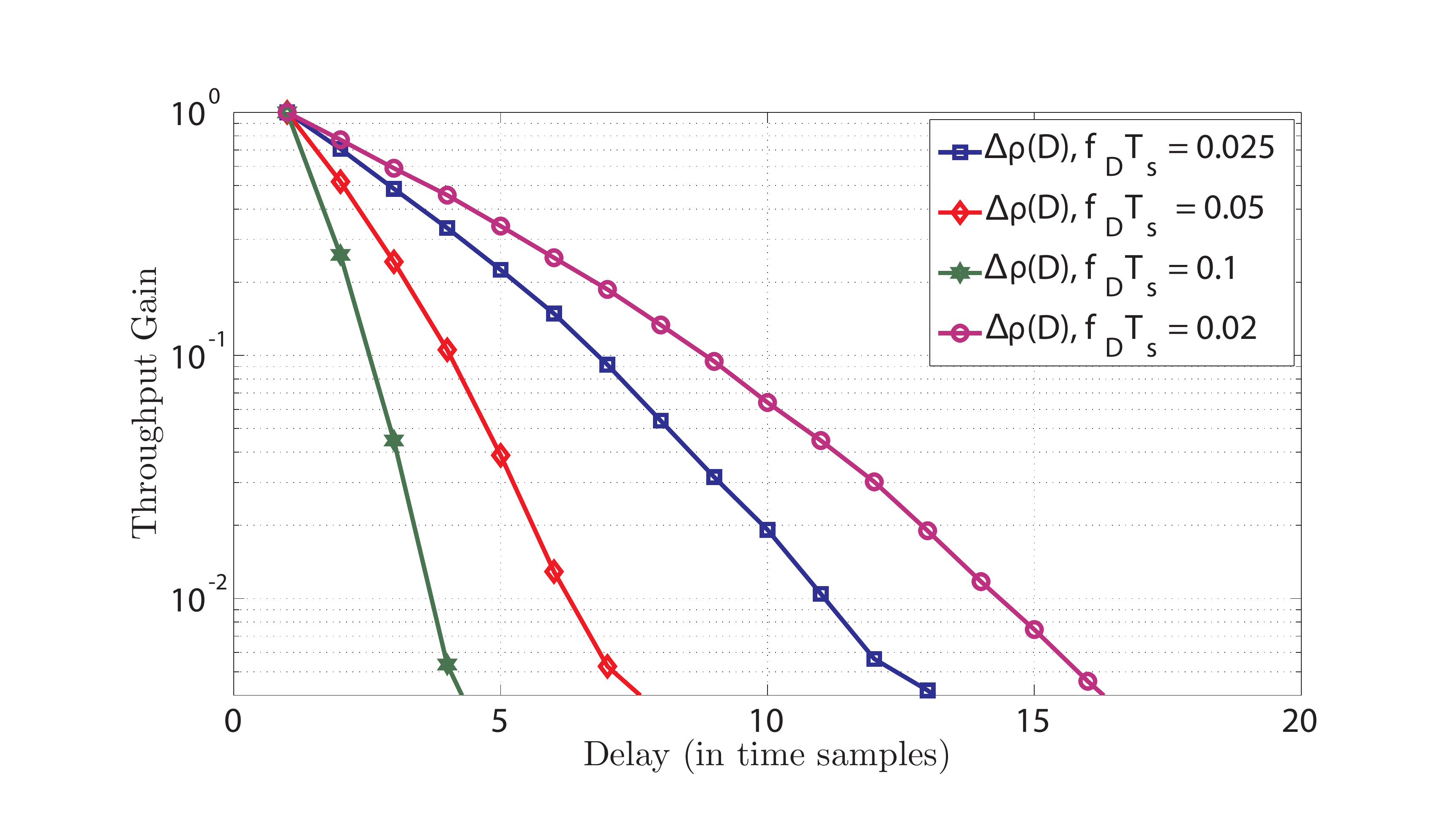}
    \caption{Effect of feedback delay on the feedback goodput gain of the beamforming system with limited feedback with varying Doppler shifts. The system considered is a $4\times 4$ MIMO system with a codebook size of 16.}
    \label{fig:comp_shifts}
  \end{center}
\end{figure}
%
%%
%%
%
%
%
%}

\end{document}